%
%

%
%
%
%
%
%
%
\documentclass[%
twocolumn,
superscriptaddress,
showpacs,
showkeys,
nofootinbib,
nobibnotes,
 amsmath,
 amssymb,
 aps,
pra,
]{revtex4-1}

\usepackage{algorithm}
\usepackage{algorithmic}
\usepackage{graphicx}
\usepackage{dcolumn}
\usepackage{bm}
\usepackage{amsthm} 
\usepackage{hyperref}


\setcounter{secnumdepth}{3}

\newcommand{\be}{\begin{equation}}
\newcommand{\ee}{\end{equation}}
\newcommand{\bea}{\begin{eqnarray}}
\newcommand{\eea}{\end{eqnarray}}


\newtheorem{thm}{Theorem}

\newtheorem{observation}{Observation}

\newtheorem{conjecture}{Conjecture}
\newtheorem{Problem}{Problem}


%
%

\usepackage[matrix,frame,arrow]{xy}
\usepackage{amsmath}

\newcommand{\ket}[1]{\left\vert{#1}\right\rangle}
\newcommand{\qw}[1][-1]{\ar @{-} [0,#1]}
\newcommand{\qwx}[1][-1]{\ar @{-} [#1,0]}
\newcommand{\cw}[1][-1]{\ar @{=} [0,#1]}

\newcommand{\gate}[1]{*{\xy *+<.6em>{#1};p\save+LU;+RU **\dir{-}\restore\save+RU;+RD **\dir{-}\restore\save+RD;+LD **\dir{-}\restore\POS+LD;+LU **\dir{-}\endxy} \qw}
\newcommand{\meter}{\gate{\xy *!<0em,1.1em>h\cir<1.1em>{ur_dr},!U-<0em,.4em>;p+<.5em,.9em> **h\dir{-} \POS <-.6em,.4em> *{},<.6em,-.4em> *{} \endxy}}





\newcommand{\control}{*-=-{\bullet}}

\newcommand{\ctrl}[1]{\control \qwx[#1] \qw}
\newcommand{\rstick}[1]{*!L!<-.5em,0em>=<0em>{#1}}
\newcommand{\lstick}[1]{*!R!<.5em,0em>=<0em>{#1}}


\newcommand{\Qcircuit}{\xymatrix @*=<0em>}



\begin{document}

\title{
   Efficient Decomposition of Single-Qubit Gates into $V$ Basis Circuits
}

\author{Alex Bocharov}
  \email{alexeib@microsoft.com}
  \affiliation{Quantum Architectures and Computation Group,
 Microsoft Research, Redmond, WA 98052 USA}

  \author{Yuri Gurevich}
  \email{gurevich@microsoft.com}
  \affiliation{Research In Software Engineering Group,
 Microsoft Research, Redmond, WA 98052 USA}

  \author{Krysta M. Svore}
   \email{ksvore@microsoft.com}
   \affiliation{Quantum Architectures and Computation Group,
 Microsoft Research, Redmond, WA 98052 USA}

\begin{abstract}
We develop the first constructive algorithms for compiling single-qubit unitary gates into circuits over the universal $V$ basis.
The $V$ basis is an alternative universal basis to the more commonly studied $\{H,T\}$ basis.
We propose two classical algorithms for quantum circuit compilation: the first algorithm has expected polynomial time (in precision $\log(1/\epsilon)$) and offers a depth/precision guarantee that improves upon state-of-the-art methods for compiling into the $\{H,T\}$ basis by factors ranging from $1.86$ to $\log_2(5)$.
The second algorithm is analogous to direct search and yields circuits a factor of 3 to 4 times shorter than our first algorithm, and requires time exponential in $\log(1/\epsilon)$;
however, we show that in practice the runtime is reasonable for an important range of target precisions.
\end{abstract}

\pacs{03.67.Lx, 03.65.Fd}
\keywords{quantum gate decomposition, quantum compilation}

\maketitle

\section{Introduction}
Determining the optimal fault-tolerant compilation, or decomposition, of a quantum gate is critical for designing a quantum computer.
Decomposition of single-qubit unitary gates into the $\{H, T\}$ basis has been well studied in recent years.
However, there have been few studies of decomposing into alternative bases, which may offer significant improvements in circuit depth or resource cost.
In this work, we consider the task of decomposing a single-qubit unitary gate into a sequence of gates drawn from the \emph{V basis}, first introduced in Refs.~\onlinecite{LPSI,LPSII}.
Historically, this basis was the first shown to be {\it efficiently universal}, in that the length of the decomposition sequence is guaranteed to be of depth $O(\log(1/\epsilon))$ \cite{HRC}, however the proof did not offer a constructive algorithm.
Recently, it has been shown that $\{H, T\}$ is also efficiently universal \cite{Selinger},\cite{KMM1231}, and the proofs are constructive.
In this work, we show that despite recent advances for the $\{H,T\}$ basis, the $V$ basis allows for significantly shorter decomposition circuits.

We present two algorithms for compilation into the $V$ basis.
The first algorithm approximates single-qubit unitaries over the set consisting of the $V$ basis and the Clifford group;
the second approximates over the set consisting of the $V$ basis and the Pauli gates.
The first algorithm runs in expected polynomial time and delivers $\epsilon$-approximations with circuit depth $\leq 12 \, \log_5(2/\epsilon)$.
%
The second algorithm produces $\epsilon$-approximations with circuit depth
$\leq 3\log_{5}{\left(1/\epsilon \right)}$ for most single-qubit unitaries,
and approximations of circuit depth $4\log_{5}{\left(2/\epsilon \right)}$ for edge cases.
The compilation time is linear in $1/\epsilon $
and thus exponential in $\log(1/\epsilon)$, however, in practice we find extremely short circuits (of length $L=28$) at precision level $\epsilon=3\ast {10}^{-7}$ with merely 1 minute of classical CPU time and modest space usage.

This work presents yet another alternative to Solovay-Kitaev decomposition, and produces circuits with lengths matching the proven lower bound of $\Omega(\log(1/\epsilon))$ \cite{HRC}.
%
We note that our motivation for studying decomposition into the $V$ basis stems from two sources.
The first was the proof in Ref.~\onlinecite{HRC} that it is efficiently universal.
The second was the recent protocol for distillation of non-stabilizer states \cite{DCS}, which gives the first known fault-tolerant implementation of one of the $V$ basis gates using only magic states, Clifford operations, and measurements.


\section{Related Work}

Recently, dramatic improvements have been achieved in quantum circuit compilation, in particular in the area of single-qubit decomposition.
We highlight four developments that are particularly relevant for interpreting our work in a more general context.

The Programmable Ancilla Rotation (PAR) method for implementing arbitrary single-qubit rotations by resource state teleportation \cite{CJetAl} underlines the tradeoff of performing an approximating circuit directly on the target \emph{logical} qubit versus on resource ancilla states followed by a teleportation protocol to interact with the target qubit.
An advantage of the method is that ancilla factories can be employed which prepare resource states for later use, in exchange for performing a probabilistic circuit on the target qubit which may require several attempts prior to success.
The actual cost of approximating a single-qubit unitary with this method is measured in terms of the number of resource states and the number of attempts required for success.

More recently, a technique for distilling non-stabilizer states was introduced in Ref.~\onlinecite{DCS} and shown to enable approximation of any single-qubit unitary.  This protocol also uses state teleportation and can achieve on average constant circuit depth.  A key consequence of this work is the ability to prepare a state that enables the fault-tolerant implementation of the $V$ basis gates.

Recent research on the characterization of $\langle H,T \rangle$ circuits \cite{KMM12,KMMb12} has lead to a seminal decomposition result:
a constructive algorithm for efficient ancilla-free compilation of a given single-qubit unitary into the $\langle H,T \rangle$ basis, with a corresponding $T$-count guarantee of the form $4\log_2(1/\epsilon) + 11$ for $Z$ rotations and $12\log_2(1/\epsilon) + K$, where $K \sim 33$ for general unitaries \cite{Selinger}.
Further improvements to this algorithm were shown in Ref.~\onlinecite{KMM1231}, which presents a less efficient compilation method that produces shorter ancilla-free approximation circuits with an expected $T$ count of $9.63\log_2(1/\epsilon)-20.79$.

Our direct search algorithm (Section \ref{sec:directalg}) produces $\epsilon$-approximation circuits with a $V$ count of $3\log_5(1/\epsilon)$ in most cases and $4\log_5(2/\epsilon)$ in edge cases.
If a fault-tolerant $V$ gate has the same cost as a fault-tolerant $T$ gate, then this algorithm gives state-of-the-art circuit depth asymptotics.
Figure \ref{fig:RelWork} plots the $T$ count ($V$ count)\footnote{For illustrative purposes, here we assume one $T$ gate has the same cost as one $V$ gate.} of the approximation circuits versus the precision $\epsilon$ for several state-of-the-art $\{H,T\}$-based methods and the $V$-based algorithms presented in Sections \ref{sec:polytime} and \ref{sec:directalg}.
The solid blue curve plots the theoretical bound for the algorithm given in Ref.~\onlinecite{Selinger}.
The dashed red curve is based on interpolation of the experimental results given in Ref.~\onlinecite{KMM1231}.
The dashed green curve plots the theoretical bound (matched by experimental data) for decomposition into the $V$ basis using our randomized algorithm (Section \ref{sec:polytime}).
The double black curve plots the average experimental results over 1000 random unitaries from decomposition into the $V$ basis using our direct search algorithm (Section \ref{sec:directalg}).

From this plot, we see that the $T$ count is substantially lower for a given precision when compiling into the $V$ basis.
These curves serve as evidence of the potential improvements in circuit decomposition by considering other bases,
and hopefully motivates research in determining an optimal and low-cost fault-tolerant implementation of a $V$ gate.
To the best of our knowledge, there is not yet a fault-tolerant implementation of the $V$ gate that has cost equal to that of the $T$ gate.

One possible exact implementation \cite{DCS} requires on average a constant depth of 3 per $V$ gate, but in turn requires an ``offline" cost in $T$ gates and is probabilistic.
If the protocol succeeds (which only occurs half of the time), then the cost per $V$ gate is only $5.35$ $T$ gates, making the algorithm competitive with \cite{Selinger} and \cite{KMM1231}.
However, if the protocol fails, then the cost increases.
Details on this implementation of a $V$ gate are given in Appendix \ref{app:Vgate}.
In order for decomposition into the $V$ basis to be competitive with state-of-the-art $\langle H,T\rangle$ decomposition,
it is necessary to determine an exact, fault-tolerant $V$ gate implementation with a cost less than the cost of 6 $T$ gates.
We proceed by describing the two algorithms for compiling into the $V$ basis.

\begin{figure}[tb]
  \centering
  \includegraphics[width=3.5in]{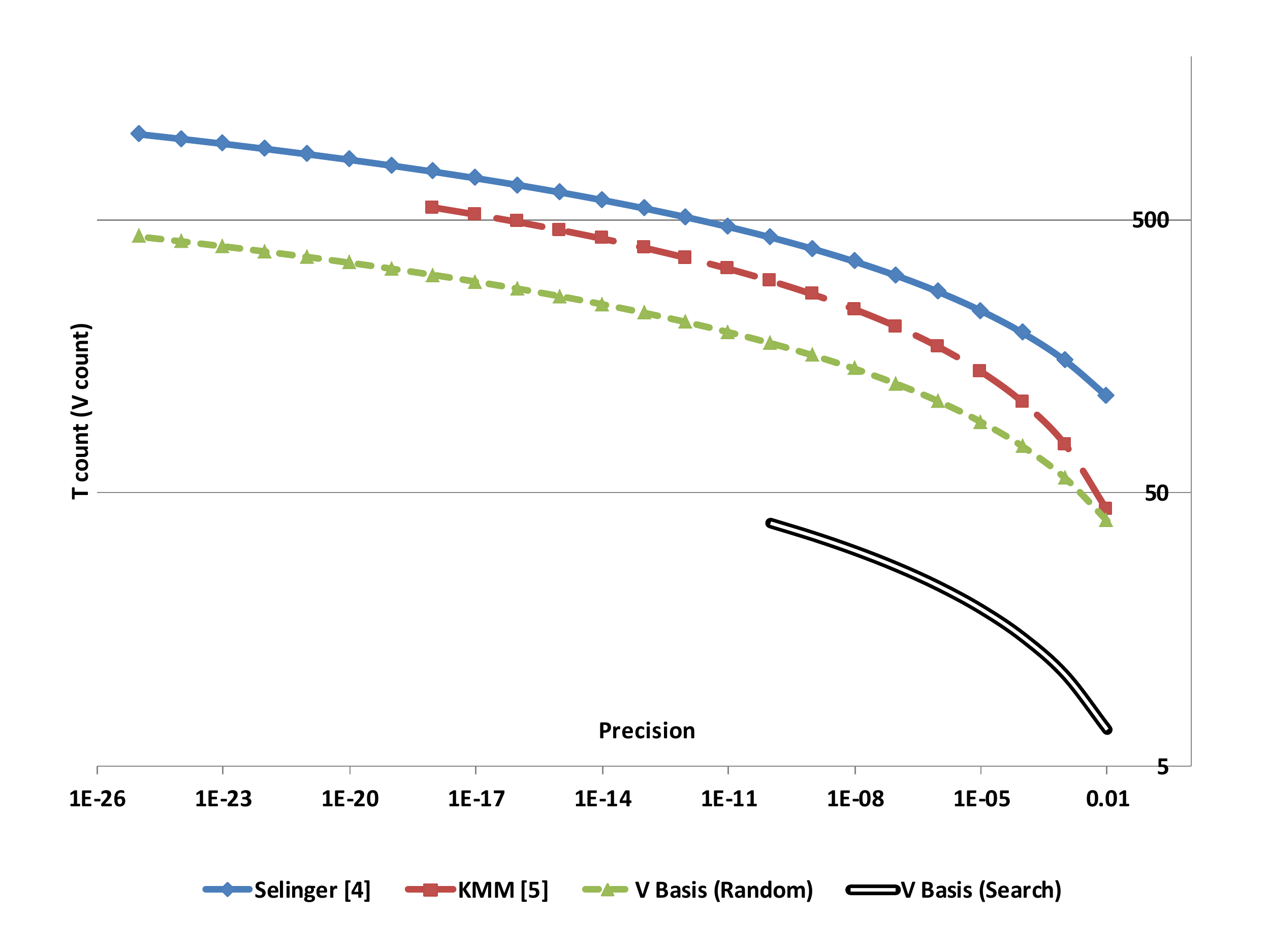}
  \caption[Depth/precision chart for four methods]{$T$ count ($V$ count) versus precision $\epsilon$ for state-of-the-art single-qubit decomposition methods.
  Algorithm in Ref.~\onlinecite{Selinger} (solid blue curve), algorithm in Ref.~\onlinecite{KMM1231} (dashed red curve),
  decomposition into the $V$ basis using the randomized algorithm (Section \ref{sec:polytime}; dashed red curve),
  decomposition into the $V$ basis using the direct search algorithm (Section \ref{sec:directalg}; dashed green curve).
  A $T$ gate is assumed to have equal cost to a $V$ gate.
}
\label{fig:RelWork}
\end{figure}



\section{Definitions and Key Theorems}
The efficiently universal single-qubit unitary basis introduced in Refs.~\onlinecite{LPSI,LPSII}
and further developed in Ref.~\onlinecite{HRC} consists of the following six special unitaries:
\begin{eqnarray*}
V_{1}=(I+2\, i\, X)/\sqrt 5 &,&\ V_{1}^{-1}=(I-2\, i\, X)/\sqrt 5,\\
V_{2}=(I+2\, i\, Y)/\sqrt 5 &,&\ V_{2}^{-1}=(I-2\, i\, Y)/\sqrt 5,\\
V_{3}=(I+2\, i\, Z)/\sqrt 5 &,&\ V_{3}^{-1}=(I-2\, i\, Z)/\sqrt 5.
\end{eqnarray*}
We call this basis the {\em $V$ basis}.

The subgroup $\langle V \rangle \subset SU(2)$ generated by this basis is everywhere dense in $SU(2)$ and thus $\{ V_i , V_i^{-1}, i=1,2,3\}$ is a universal basis.

Let the set of \emph{W circuits} be the set of those circuits generated by this basis and the Pauli
matrices $I, X, Y, Z$.

It is important to note that the monoid $\langle W \rangle=\langle X,Y,Z,V_1,V_2,V_3 \rangle \subset SU(2)$ contains all of the $\{ V_i^{-1}, i=1,2,3\}$ and thus is in fact a subgroup of $SU(2)$ containing $\langle V \rangle$.

$W$ circuits constitute a slight liberalization of the approach in Ref.~\onlinecite{HRC}, where only circuits in the $V$ basis are considered.
Our justification for the liberalization is that the Pauli operators are a staple of any quantum computing architecture and can be implemented fault tolerantly at a very low resource cost in comparison to a non-Clifford group gate.

It is also noted that the single-qubit Clifford group $\mathcal{C}$ in combination with {\it any} of the six $V$ matrices generates a monoid $\langle \mathcal{C}+V \rangle \subset SU(2)$ that is in fact a group, containing $\langle W \rangle$.

We call the number of $V$ gates the
\emph{$V$ count} of a circuit and denote it as $V_c$.
It is easy to show that an irreducible $W$ circuit contains at most one
non-identity Pauli gate. Thus, if $v$ is the $V$ count of such a circuit, then the
overall depth of the circuit is either $v$ or $v+1$.

Throughout, we use \emph{trace distance} to measure the distance between two unitaries $U,V \in PSU(2)$:
%
\be
dist(U,V) = \sqrt{1-|tr(U V^{\dagger}|/2},
\ee
%
and call the distance between a target unitary and the approximating unitary the \emph{precision} $\epsilon$.

According to \cite{HRC}, any single-qubit unitary can be approximated to a
given precision $\epsilon$ by a $V$ circuit of depth $O(\log \left( \frac{1}{\epsilon
} \right))$,
however the proof in Ref.~\onlinecite{HRC} is non-constructive, and no algorithm for
actual synthesis of the approximating circuits has yet been shown.
%
%
Here we develop effective solutions for synthesizing $W$-circuit
approximations of single-qubit unitaries.

Our solutions are based on the following theorem:
\begin{thm}\label{Proposition 1}
A single-qubit unitary gate $U$ can be exactly represented as a $W$ circuit of
$V$ count $V_c\le L\,$ if and only if it has the form
\be
U=\left( a\, I+b\, i\,
X+c\, i\, Y+d\, i\, Z \right)5^{-L/2},
\ee
where $a, b, c, d$ are integers such that $a^{2}+b^{2}+c^{2}+d^{2}=5^{L}$.
\end{thm}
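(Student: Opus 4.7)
The plan is to reformulate the theorem in the language of Lipschitz integer quaternions and then prove each direction via multiplicativity of the quaternion norm (forward) and a non-commutative division argument (reverse). Under the standard identification $iX \leftrightarrow i$, $iY \leftrightarrow j$, $iZ \leftrightarrow k$, a matrix of the form $U = (aI + b\,iX + c\,iY + d\,iZ)5^{-L/2}$ corresponds to the Lipschitz quaternion $q_U = a + bi + cj + dk$ with reduced norm $N(q_U) = a^2+b^2+c^2+d^2$. Matrix multiplication agrees with quaternion multiplication, the Pauli gates act in $PSU(2)$ as the units $i,j,k$, and each $V$-basis element corresponds to a primitive norm-5 Lipschitz quaternion: $\sqrt{5}\, V_1 \leftrightarrow 1 + 2i$, $\sqrt{5}\, V_1^{-1} \leftrightarrow 1 - 2i$, and analogously for $V_2, V_3$. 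In this language the theorem reads: $q_U$ is expressible as a product of $L$ such norm-5 quaternions, interspersed with unit Paulis, if and only if $q_U$ is Lipschitz with $N(q_U) = 5^L$.

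For the forward direction, expand a $W$ circuit of $V$ count $V_c$ into its generators. Multiplicativity of $N$ immediately yields a Lipschitz quaternion of norm $5^{V_c}$, and by inserting $V_j V_j^{-1} = I$ pairs (which add 2 to $V_c$ without altering $U$) one may assume $V_c = L$, matching the stated denominator and sum-of-squares exactly.

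The reverse direction is the heart of the proof. I would induct on $L$. The base case $L = 0$ forces $q_U \in \{\pm 1, \pm i, \pm j, \pm k\}$, each a Pauli or identity gate, realized by a $W$ circuit of $V$ count 0. For the inductive step, given Lipschitz $q$ with $N(q) = 5^L$ and $L \geq 1$, the key step is to exhibit a norm-5 right divisor $q_0 \in \{1 \pm 2i, 1 \pm 2j, 1 \pm 2k\}$ (possibly preceded by a Pauli unit absorbed as an extra factor); then $q = q' q_0$ with $q'$ Lipschitz of norm $5^{L-1}$, and the inductive hypothesis closes the argument.

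Producing such a divisor is the main obstacle. My plan is to work in the Hurwitz maximal order, in which 5 splits into six primes (up to units) and right ideals are principal, so $q$ has a Hurwitz right divisor of norm 5 that is unique modulo left multiplication by Hurwitz units. Reducing $(a,b,c,d) \bmod 5$ identifies which of the six prime classes the divisor belongs to, and the six $V$-basis quaternions exhaust those classes. The delicate point is to verify that the Hurwitz unit relating the divisor to a $V$-basis element can always be taken among $\{\pm 1, \pm i, \pm j, \pm k\}$ (corresponding to a Pauli), rather than a half-integer Hurwitz unit that would take us outside the Lipschitz order; this reduces to a finite residue-class check leveraging the fact that $q$ itself is Lipschitz.
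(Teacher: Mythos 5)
Your proposal sets up the same Lipschitz quaternion correspondence as the paper's Theorem~\ref{thm2}: $W$ circuits $\leftrightarrow$ Lipschitz quaternions of norm $5^l$, Pauli gates $\leftrightarrow$ the unit quaternions $\pm1,\pm\mathbf{i},\pm\mathbf{j},\pm\mathbf{k}$, and $V$ gates $\leftrightarrow$ the norm-$5$ generators $1\pm2\mathbf{i}$, etc. The forward (``only if'') direction via multiplicativity of the norm is identical. Where you diverge is the reverse (``if'') direction. The paper appeals directly to a factorization $q = p_1\cdots p_m$ into irreducible quaternions and asserts each $p_i$ has norm $1$ or $5$, implicitly invoking the fact that over the Lipschitz order an irreducible element of odd norm has prime norm. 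You instead induct on $L$, peeling off a single norm-$5$ right divisor using the Hurwitz order's principal right-ideal structure (six primes above $5$ up to units), then descending back to the Lipschitz order by a residue check. The two routes rest on the same underlying factorization theory, but yours is more explicit about the Hurwitz/Lipschitz interplay that the paper silently glosses over: the paper's version is shorter, while yours isolates exactly where the Lipschitz (as opposed to Hurwitz) structure is used. You flag the Hurwitz-unit-to-Lipschitz-unit reduction as the delicate point and leave it unverified; that is a standard fact (every Hurwitz quaternion of odd norm has a Lipschitz associate, and units can be passed through the factorization), and its omission is no worse than the paper's tacit use of the same.

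One concrete defect in the write-up: the padding step in your forward direction, ``insert $V_j V_j^{-1}$ pairs until $V_c = L$,'' changes $V_c$ by $2$ at a time and therefore only reaches $L$ when $L - V_c$ is even. If the minimal $V$ count of $U$ has opposite parity from $L$, then $U$ cannot be written with denominator $5^{L/2}$ and integer numerator coefficients at all (a $\sqrt{5}$ factor appears). For example, $U = V_1$ with $L = 2$: one has $V_c = 1 \le 2$, yet there are no integers $a,b,c,d$ with $a^2+b^2+c^2+d^2 = 25$ and $(aI + b\,iX + c\,iY + d\,iZ)/5 = V_1$. This is a parity wrinkle already present in the theorem as stated in the paper — it is not something your argument introduced — but your padding sentence asserts something false as a proof step, so you should either restrict to $L \equiv V_c \pmod 2$ or state the theorem with $V_c = L$ rather than $V_c \le L$.
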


Thm \ref{Proposition 1} follows from Thm \ref{thm2} given below, which also gives rise to a simple
constructive procedure for synthesizing a $W$ circuit that represents such
a $U$. 

We begin by sketching a linear-time subalgorithm for exact $W$-circuit synthesis that employs
arithmetic of Lipschitz quaternions \cite{Lip,CS}.
More specifically, consider the group $W$ of quaternions generated by
\begin{equation}
\label{eq1}
\pm 1,\pm \mathbf{i},\pm \mathbf{j},\pm \mathbf{k},1\pm 2\mathbf{i},\, 1\pm
2\mathbf{j},1\pm 2\mathbf{k}.
\end{equation}

Then the following holds:
\begin{thm}\label{thm2}
(1) $W$ is equal to the set of Lipschitz quaternions with norms
 $5^{l}, ( l\in \mathbb Z,\, l\ge 0)$.
(2) Consider the group $W_1 = \{w/\sqrt{norm(w)} | w \in W \}$. Then the subgroup of gates in $PSU(2)$ representable as exact  $W$-circuits is isomorphic to the central quotient $W_1/Z(W_1)$ where $Z(W_1)={\mathbb Z}_2=\{1,-1\}$.
\end{thm}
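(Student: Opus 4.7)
For part (1), I argue by double inclusion. The inclusion of $W$ into the Lipschitz quaternions of norm $5^l$ is immediate from multiplicativity of the quaternion norm: each generator in (\ref{eq1}) has integer coefficients and norm $1$ or $5$, so any product lands in the Lipschitz order with norm a power of $5$. For the reverse inclusion I would induct on $l$, with base case $l=0$ covered by the eight Lipschitz units $\pm 1, \pm\mathbf i, \pm\mathbf j, \pm\mathbf k$, all of which appear among the generators.

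For the inductive step, given a Lipschitz quaternion $q$ of norm $5^l$ with $l \geq 1$, the goal is to locate a ``prime above $5$'' among $g \in \{1 \pm 2\mathbf i, 1 \pm 2\mathbf j, 1 \pm 2\mathbf k\}$ such that $q\bar g/5$ is again Lipschitz; by induction this quotient (of norm $5^{l-1}$) will then lie in $W$, and hence so will $q = (q\bar g/5)\, g$. Since $g\bar g = 5$, the condition is the congruence $q\bar g \equiv 0 \pmod 5$. The natural setting is the quotient algebra $L/5L$ of the Lipschitz order $L$; because $-1 = 2^2$ in $\mathbb F_5$ the quaternion relations split, yielding $L/5L \simeq M_2(\mathbb F_5)$ with the reduced norm becoming the determinant. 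The hypothesis $5 \mid \mathrm{norm}(q)$ forces the reduction $\bar q \in M_2(\mathbb F_5)$ to be singular. If $\bar q = 0$ (necessarily $l \geq 2$), then $q = 5q'$ with $q'$ of norm $5^{l-2}$ and $q' \in W$ by induction; since $5 = (1-2\mathbf i)(1+2\mathbf i)$ is a central product of generators, $q \in W$. Otherwise $\bar q$ has rank one and its right annihilator is a rank-one left ideal of $M_2(\mathbb F_5)$.

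The main technical step, and the part I expect to require the most care, is verifying that the six reductions $\overline{1\pm 2\mathbf i}, \overline{1\pm 2\mathbf j}, \overline{1\pm 2\mathbf k}$ collectively meet every rank-one left ideal of $M_2(\mathbb F_5)$. Such ideals are parametrised by $\mathbb P^1(\mathbb F_5)$ and therefore number exactly six; so after fixing an explicit embedding of $\mathbf i, \mathbf j, \mathbf k$ as anticommuting square roots of $-I$ in $M_2(\mathbb F_5)$, a short finite computation verifies that the six reduced primes occupy the six distinct rank-one left ideals. Granting this, some $\bar g$ lies in the right annihilator of $\bar q$, the congruence $q\bar g \equiv 0 \pmod 5$ holds, and the induction descends from $l$ to $l-1$.

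For part (2), a $W$-circuit, viewed as an element of $SU(2)$, is a product of the generating gates $\pm I, \pm iX, \pm iY, \pm iZ, V_i^{\pm 1}$; pulling out the central factors of $\sqrt 5$ produced by each $V_i^{\pm 1}$, each such circuit can be written in the form $w/\sqrt{\mathrm{norm}(w)}$ for some $w \in W$ under the standard identification $1, \mathbf i, \mathbf j, \mathbf k \leftrightarrow I, iX, iY, iZ$. Hence the $W$-circuit subgroup of $SU(2)$ is precisely $W_1$. The canonical projection $SU(2) \to PSU(2)$ identifies each pair $u, -u$, so the $W$-circuit subgroup of $PSU(2)$ is $W_1/(W_1 \cap \{\pm I\}) = W_1/\{\pm 1\}$. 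Finally, density of $\langle V \rangle \subseteq W_1$ in $SU(2)$ forces any element central in $W_1$ to commute with a dense subset of $SU(2)$ and hence to be scalar, giving $Z(W_1) = \{\pm 1\}$ as claimed.
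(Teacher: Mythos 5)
Your proof is correct, and it takes a genuinely different route from the paper's in both parts.

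For part~(1), the paper handles general $l$ by invoking the existence of a prime quaternion factorization $q = p_1 \cdots p_m$ and reducing to the cases $l=0,1$, which it settles by enumerating the four-square representations of $5$. You instead run the descent directly: reduce mod~$5$ into $L/5L \cong M_2(\mathbb F_5)$, use $5 \mid \mathrm{norm}(q)$ to force $\tilde q$ singular, and split into the scalar case $\tilde q = 0$ and the rank-one case, where you extract a generator by studying minimal ideals. The finite check you defer does go through: with $\mathbf i \mapsto \mathrm{diag}(2,-2)$, $\mathbf j$ the standard $90^\circ$ rotation matrix, and $\mathbf k = \mathbf{ij}$, the kernels of the reductions of $1\pm2\mathbf i$, $1\pm2\mathbf j$, $1\pm2\mathbf k$ hit all six lines of $\mathbb P^1(\mathbb F_5)$. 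One slip worth correcting: $\{m : \tilde q m = 0\}$ is a minimal \emph{right} ideal, not a left ideal; this does not affect the count (both families are indexed by $\mathbb P^1(\mathbb F_5)$) nor the argument, since quaternion conjugation passes to the adjugate in $M_2(\mathbb F_5)$, which exchanges kernel with image, so the condition you actually need is exactly that the six kernels $\ker\tilde g$ exhaust $\mathbb P^1(\mathbb F_5)$, as verified. Two things your route buys that the paper's does not: it directly proves correctness of the paper's Algorithm~\ref{alg:sub} (the trial-division subalgorithm), and it avoids the untreated subtlety that a \emph{Hurwitz} prime factorization of a Lipschitz $q$ may contain non-Lipschitz factors of norm $5$, such as $(1 + \mathbf i + 3\mathbf j + 3\mathbf k)/2$.

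For part~(2), the paper defines the homomorphism $h: W_1 \to PSU(2)$ on generators, checks well-definedness by comparing relations, and computes $\ker h$ by composing with the faithful adjoint representation of $PSU(2)$ and appealing to the known kernel $\{\pm1\}$ of the unit-quaternion to $SO(3)$ map. You instead identify the $W$-circuit subgroup of $SU(2)$ with $W_1$ directly under the standard quaternion correspondence, pass to $PSU(2)$ by quotienting by $\{\pm 1\}$, and pin down $Z(W_1) = \{\pm1\}$ via density of $\langle V\rangle$ in $SU(2)$. Both are correct; yours is shorter and avoids the $SO(3)$ detour.
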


\begin{proof}

(1) We recall that the quaternion norm is multiplicative and that $\pm 1,\pm \mathbf{i},\pm \mathbf{j},\pm \mathbf{k}$ are the only
Lipschitz quaternions of norm 1. Thus statement (1) is true for $l=0$.

We prove it for $l=1$: More specifically, let  $q = a + b \, \mathbf{i} + c \, \mathbf{j} + d \, \mathbf{k}, a,b,c,d \in \mathbb Z$ and
$norm(q)= a^2+b^2+c^2+d^2=5$.

Decompositions of $5$ into sums of squares of four integers are easily enumerated and we conclude that exactly two of the coefficients in the list $\{a,b,c,d\}$ are zero, exactly one is $\pm 1$, and exactly one is $\pm 2$.

If $a= \pm 1$ then we observe that $q$ is equal to one of $1\pm 2\mathbf{i},\, 1\pm
2\mathbf{j},1\pm 2\mathbf{k}, -(1\pm 2\mathbf{i}),\, -(1\pm
2\mathbf{j}),-(1\pm 2\mathbf{k})$ and thus belongs to $W$.

If one of $b,c,d$ is $\pm 1$ we reduce the proof to the previous observation by multiplying $q$ times one of $\mathbf{i}, \mathbf{j}, \mathbf{k}$.

For example, if $c = \pm 1$ , then the real part of $-\mathbf{j} q$ is equal to $c = \pm 1$.

Consider now a quaternion $q$ with $norm(q) = 5^l , l \geq 1$.

Let $q=p_1 \, ...\, p_m$ be a prime quaternion factorization of $q$.
Since $5^l = norm(q) = norm(p_1) \, ...\, norm(p_m)$ , for each $i = 1,...m$ the $norm(p_i)$ is either 5 or 1.
As we have shown above (considering $l=0,1$), in either case $p_i \in W$.

(2) Effective homomorphism $h$ of $W_1$ onto the W-circuits is the multiplicative completion of the following map:
\begin{eqnarray*}
\mathbf{i} &\rightarrow& i \, X\\
\mathbf{j} &\rightarrow& i \, Y\\
\mathbf{k} &\rightarrow& i \, Z\\
(1\pm 2\mathbf{i})/\sqrt{5} &\rightarrow& (1 \pm 2 \, i \, X)/\sqrt{5}\\
(1\pm 2\mathbf{j})/\sqrt{5} &\rightarrow& (1 \pm 2 \, i \, Y)/\sqrt{5}\\
(1\pm 2\mathbf{k})/\sqrt{5} &\rightarrow& (1 \pm 2 \, i \, Z)/\sqrt{5}.
\end{eqnarray*}
The correctness of this definition of homomorphism $h$ is verified by direct comparison of multiplicative relations between the generators of
$W_1$ and $ g(W) = \{i \, X, i \, Y, i \, Z, (1 \pm 2 \, i \, X)/\sqrt{5}, (1 \pm 2 \, i \, Y)/\sqrt{5}, (1 \pm 2 \, i \, Z)/\sqrt{5} \}$.
These relations happen to be identical.

$h$ is an epimorphism since all of the generators $g(W)$ of the W-circuits group are by design in its image.

The characterization of $\mbox{\em Ker}(h)$ is derived from representation of quaternions as orthogonal rotations of the 3-dimensional Euclidean space.

Let us arbitrarily map the units $\mathbf{i},\mathbf{j},\mathbf{k}$ into vectors of an orthonormal  basis in the Euclidean space and let us label the corresponding basis vectors $e(\mathbf{i}),e(\mathbf{j}),e(\mathbf{k})$.
For a quaternion with zero real part $p=b \, \mathbf{i} + c \, \mathbf{j} + d \, \mathbf{k}$ we write $e(p) = b *e(\mathbf{i}) + c *e(\mathbf{j}) + d *e(\mathbf{k})$.

Let $H_1$ be the group of quaternions of norm 1 and $g:H_1 \rightarrow SO(3)$ be the representation defined as
$g(q)[e(b)] =  e(q * b * q^{-1})$.

It is known \cite{CS} that $g(q)$ is an orthogonal rotation; $g$ is a representation of the group of quaternions of norm 1 and that the kernel of this representation is the cyclic group ${\mathbb Z}_2 = \{1,-1\}$.

The group of quantum gates $PSU(2)$ also has a standard orthogonal representation stemming from its adjoint representation on the Lie algebra $\mathfrak{psu}(2) = \mathfrak{su}(2)= \mathfrak{so}(3)$.
More specifically if $\mathfrak{psu}(2)$ is regarded as the algebra of zero-trace Hermitian matrices then
$ad: PSU(2) \rightarrow Aut(\mathfrak{psu}(2))$, where $Aut$ is the automorphism, is $ad(u)[m] = u \, m \, u^{-1}$.

The adjoint representation of $PSU(2)$ is faithful.

If we regard the above homomorphism $h$ as the homomorphism $h: W_1 \rightarrow PSU(2)$ then it is immediate that $ad \, h = g$ on $W_1$.
Since $ad$ is faithful, i.e., injective,  the kernel of $h$ coincides with $\mbox{\em Ker}(g) = Z(W_1) = (\mathbb {Z})_2 = \{-1,1\}$.
\end{proof}

Lipschitz quaternions form a division ring, and in view of Thm \ref{thm2}, a quaternion with norm equal to $5^{l}$ can be decomposed into a product of generators in Eq \ref{eq1} in $l$ trial division steps.

The \emph{decomposition subalgorithm} (Algorithm \ref{alg:sub}) is thus as follows, with input being a Lipschitz quaternion $q$ of norm $5^{l}$:
%
\begin{algorithm}[H]
\caption{Decomposition Subalgorithm}
\label{alg:sub}
\algsetup{indent=2em}
\begin{algorithmic}[1]
\REQUIRE{A quaternion $q$ with norm $5^l$}
\STATE{$ret \leftarrow$ empty list}
\WHILE{$norm(q) > 0$}
\STATE{find $d$ in $\{1\pm2\mathbf{i},1\pm2\mathbf{j},1\pm 2\mathbf{k}\}$ such that $d$ divides $q$}
\STATE{$ret \leftarrow \{d\} + ret$}
\STATE{$q \leftarrow q/d$} //divides $norm(q)$ by $5$
\ENDWHILE
\IF{$q \neq 1$}
\STATE{$ret \leftarrow q + ret$}
\ENDIF
\RETURN{$ret$}
\end{algorithmic}
\end{algorithm}

Now, given a unitary $U$ as described in Thm \ref{Proposition 1}, we associate with it
the quaternion
$q=a+b\, \mathbf{i}+c\, \mathbf{j}+d\, \mathbf{k}$ that has norm
$5^{L}$ and thus belongs to the subgroup $W$.
It is easy to translate the factorization of $q$ in the basis given in Eq \ref{eq1} into a
factorization of $U$ in the $W$ basis.
Thus, the approximation of a target unitary gate $G$ by a $W$ circuit is
constructively reduced to approximating $G$ with a unitary $U$ as described in
Thm \ref{Proposition 1}.

\section{Randomized Approximation Algorithm}
\label{sec:polytime}
In this section, we present an algorithm for decomposing single-qubit unitaries into a circuit in the set $\langle \mathcal{C} + V \rangle$,
where $\mathcal{C}$ is the set of single-qubit Clifford gates and $V$ is one of the $V$ gates.
The expected polynomial runtime is based on a conjecture, for which we have developed ample empirical evidence (based on computer simulation).
We first present the conjecture and relevant number theory background, and then present the compilation algorithm.

\subsection{Number Theory Background}
Let $N$ be a large positive integer, and $\Delta$ be a relatively small fixed offset value.
Let $x,y$ be standard coordinates on a 2-dimensional Euclidean plane.

We introduce the circumference
\begin{equation*}
C(N,\Delta)= \{ (x,y)\ |\ x^2 + y^2 = (\sqrt{N}-\Delta)^2 \}.
\end{equation*}
Let $R(N,\Delta)$ be the circular ring of width $\Delta$ defined as
\begin{equation*}
R(N,\Delta)=\{(x,y)\ |\ (\sqrt{N}-\Delta)^2 < x^2 + y^2 < N \}.
\end{equation*}

Consider a tangent straight line at any point on the circumference $C(N,\Delta)$. The line divides the plane into two half-planes and let $P_{+}$ be the half-plane that does not contain the origin.

Next, define the circular segment
\begin{equation*}
A(N,\Delta, P_{+}) = R(N,\Delta) \bigcap P_{+}.
\end{equation*}

The ring $R(N,\Delta)$ and the circular segment $A(N,\Delta, P_{+})$ are shown schematically in Figure \ref{fig:RingAndSector}.
\begin{figure}[tb]
  \centering
  \includegraphics[width=3.5in]{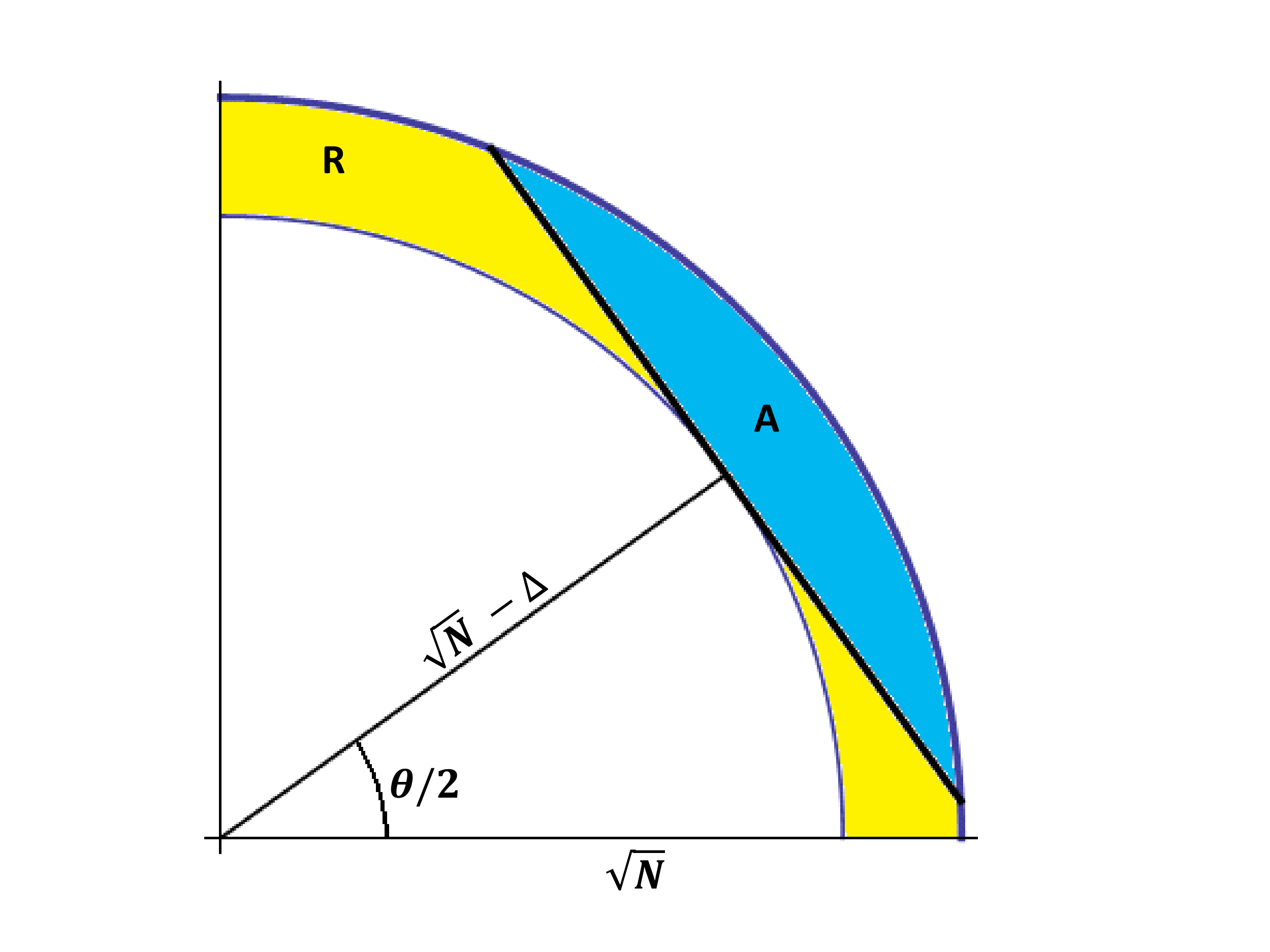}
  \caption[Conjecture 1 graphics]{The ring $R(N,\Delta)$ (yellow) and the segment $A(N,\Delta, P_{+})$ (blue), illustrated for values $N=625$ and $\Delta=4$.}
\label{fig:RingAndSector}
\end{figure}

We are concerned here with the segments of the standard integer grid that are contained in $R(N,\Delta)$ and $A(N,\Delta, P_{+})$, and their asymptotic behavior when $N \rightarrow \infty$.

We note that the Euclidean area $\mathcal{A}$ of $R(N,\Delta)$ is
\begin{equation*}
\mathcal{A}(R(N,\Delta)) = 2 \pi \, \Delta \sqrt{N} + O(\Delta^2)
\end{equation*}
and the Euclidean area of $A(N,\Delta, P_{+})$ is
\begin{equation*}
\mathcal{A}(A(N,\Delta, P_{+})) = 4/3 \, \Delta \sqrt{2 \Delta} N^{1/4} + O(\Delta^{5/2} N^{-1/4}).
\end{equation*}

Estimation of the number of integer grid points inside a flat contour is a known open problem with a rich history \cite{Guy}.
For our purposes, it suffices to know that the number of integer grid points
\begin{equation*}
\{ x,y \in \mathbb Z, (x,y) \in R(N,\Delta)\}
\end{equation*}
is asymptotically equal to $\Theta(\Delta \sqrt{N})$ and that the number of integer grid points
\begin{equation*}
\{ x,y \in \mathbb Z, (x,y) \in A(N,\Delta, P_{+})\}
\end{equation*}
is asymptotically equal to $\Theta(\Delta^{3/2} N^{1/4})$. These claims can be proven by elementary geometric means.

Finally, we assume that $N=p^{L}$ where $p$ is a fixed integer prime number with $p=1 \mod 4$ and $L$ is a large integer.

Consider the set
\begin{equation*}
s_4(N)=\{(x,y,z,w)\in {\mathbb Z}^4\ |\ x^2+y^2+z^2+w^2 = N\}
\end{equation*}
of all representations of $N$ as a sum of squares of four integers.
For $N=p^{L}$, the cardinality of the set is
\begin{equation*}
card(s_4(N)) = 8 (p^{L+1}-1)/(p-1) = O(N).
\end{equation*}
This is an immediate consequence of the formula expressing $|s_4(N)|$ as 8 times the sum of divisors of $N$ (see \cite{HW}).

The projection of $s_4(N)$ on the $(x,y)$ plane is contained in the circle of radius $\sqrt{N}$ and the projection of each point is an integer grid point in that circle.
The converse is not true:  for $N=p^{L}$ there are roughly $ 2 \, (p^{L+1}-1)/(p-1)$ projection points in the circle, while there are roughly $\pi \, p^L$ integer grid points (see \cite{Kraetzel}).

To determine the complexity of our first algorithm, we require a conjecture that states, informally, that the density of the $(x,y)$-projection points of $s_4(N)$ in the ring $R(N,\Delta)$ and the segment $A(N,\Delta, P_{+})$ is the same as the density of these projection points in the entire circle of radius $\sqrt{N}$.\footnote{To show that the algorithm requires expected polynomial time, a weaker form of Conjecture \ref{conj:1} may be considered;
the weaker claim is that the density of the projection points in the ring and the segment is at most polylogarithmically lower than their density in the circle.} The conjecture is motivated by the Corollary from Theorem 1 in Ref.~\onlinecite{Hooley}.
Although the conjecture is presented for a general prime $p = 1\mod 4$, our algorithms are developed for $p=5$, thus we require it to be true only for $p=5$.

\begin{conjecture}
\label{conj:1}
Consider $N=p^{L}$, where $p$ is a fixed integer prime number with $p=1 \mod 4$ and $L$ is a large even integer.
For a constant $\Delta > 1$, let the four-square decomposition set $s_4(N)$, the geometric ring $R(N,\Delta)$, and the circular segment $A(N,\Delta, P_{+})$ be defined as above.
Let $Pr_{x,y}(s_4(N))$ be the projection of the $s_4(N)$ onto its first two coordinates.
Then
\begin{eqnarray*}
&(1)&\ card\left(Pr_{x,y}(s_4(N)\right) \bigcap R(N,\Delta)) = \Theta\left(p^{L/2}/L\right)\\
&(2)&\ card\left(Pr_{x,y}(s_4(N)\right) \bigcap A(N,\Delta, P_{+})) = \Theta\left(p^{L/4}/L\right)
\end{eqnarray*}
\end{conjecture}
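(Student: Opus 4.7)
My plan is to derive both estimates from an effective equidistribution statement for the four-square point set $s_4(N)$ on the three-sphere $S^3(\sqrt N)\subset\mathbb R^4$, combined with the observation that the coordinate projection $\pi(x,y,z,w)=(x,y)$ pushes the uniform surface measure on $S^3(\sqrt N)$ forward to Lebesgue measure on the disk $D(\sqrt N)$. Once equidistribution is in place, the count of $s_4(N)$-points projecting into any Jordan-measurable region $E\subset D(\sqrt N)$ is asymptotically proportional to $\mathcal{A}(E)$, and substituting $E=R(N,\Delta)$ and $E=A(N,\Delta,P_{+})$ with the already-computed areas delivers the claimed orders (up to the $1/L$ correction discussed below).

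\textbf{Step 1 (pushforward).} Parameterize $S^3(\sqrt N)$ over $D(\sqrt N)$ by $(x,y,\theta)\mapsto\bigl(x,y,\sqrt{N-r^2}\cos\theta,\sqrt{N-r^2}\sin\theta\bigr)$ with $r^2=x^2+y^2$. A direct calculation of the induced metric yields $\det g = N$ identically, so the surface element is $\sqrt N\,dx\,dy\,d\theta$; integrating out $\theta$ gives $\pi_{\ast}\,d\sigma = 2\pi\sqrt N\,dx\,dy$. This is the key geometric input: the pushforward of the round measure is exactly uniform on the disk, with no radial distortion, which is what lets area ratios of $E$ translate directly into projected point counts.

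\textbf{Step 2 (equidistribution and counting).} I would invoke the corollary of Theorem 1 in Ref.~\onlinecite{Hooley} (and its Duke--Iwaniec-type quantitative refinements for integer points on spheres) to conclude that the configuration $s_4(N)\subset S^3(\sqrt N)$ equidistributes with respect to the surface measure as $L\to\infty$. Combined with Step~1, this gives, for Jordan-measurable $E\subset D(\sqrt N)$,
\[
\#\{q\in s_4(N):\pi(q)\in E\}=|s_4(N)|\,\frac{\mathcal{A}(E)}{\pi N}\,(1+o(1)).
\]
To pass from this ``with-multiplicity'' count to the set cardinality $\mathrm{card}(Pr_{x,y}(s_4(N))\cap E)$, divide by the typical multiplicity $r_2(N-x^2-y^2)$ of a fiber; for $N=p^{L}$ with $p\equiv 1\bmod 4$, an averaged Landau--Ramanujan analysis over the narrow ranges of $M=N-x^2-y^2$ induced by $R$ and $A$ is expected to supply the denominator $L$, producing $\Theta(p^{L/2}/L)$ and $\Theta(p^{L/4}/L)$ respectively.

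\textbf{Main obstacle.} The principal difficulty is making Step~2 \emph{quantitatively} sharp enough to resolve the segment $A(N,\Delta,P_{+})$, whose area is only $\Theta(N^{1/4})$, i.e.\ a factor $\Theta(N^{-3/4})$ of the full disk. Hooley's unconditional corollary is likely too weak at this scale, and closing the gap would require subconvex bounds on the Fourier coefficients of the theta series $\theta^{4}$, or equivalently genuine cancellation in the associated Kloosterman sums, that are known only conditionally or in restricted regimes. A secondary complication is that $A$ sits right against the boundary of the disk, where $\pi$ becomes singular, so its boundary contribution must be controlled by a smooth cutoff. These obstacles are presumably why the claim appears as a conjecture supported by computer experiments, and why the authors separately remark (in the footnote) that a weaker polylog-loss version already suffices for their expected-polynomial-time application.
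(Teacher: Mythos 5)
Conjecture \ref{conj:1} is not proved in the paper --- the authors explicitly present it as a conjecture supported only by a heuristic count and by computer experiments --- so there is no paper proof for your proposal to match. Your route (equidistribution of $s_4(N)$ on $S^3(\sqrt N)$, plus the pushforward identity, plus division by a typical fiber multiplicity) is genuinely different from the paper's heuristic, which instead applies Hooley's Theorem 1 to count the distinct values of $a^2+b^2$ in the annular short interval and then applies Landau--Ramanujan to estimate what fraction of the residuals $p^L-a^2-b^2$ are themselves sums of two squares. Your pushforward computation is correct and a nice geometric observation: with the parameterization you chose one indeed finds $\det g=N$, so $\pi_*\,d\sigma = 2\pi\sqrt N\,dx\,dy$ and with-multiplicity counts over Jordan regions reduce to area ratios. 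You are also right to flag that the segment $A(N,\Delta,P_+)$, having area $\Theta(N^{1/4})$ (a share $\Theta(N^{-3/4})$ of the disk), is far below the scale that any unconditional quantitative equidistribution bound can resolve.

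Two concrete issues with the proposal. First, Hooley's Corollary to Theorem 1, as the paper invokes it, is a result about the number of integers in a short interval that are sums of two squares --- not an equidistribution statement for integer points on $S^3$ --- so it is not the right citation for Step 2; equidistribution of $s_4(N)$ on $S^3$ is classical and comparatively easy (since $\lvert s_4(N)\rvert$ grows like $N$), but it comes from a different body of work. Second, the arithmetic in your Step 2 does not actually produce the conjectured $p^{L/2}/L$: the with-multiplicity count over $R$ is $\Theta(p^{L/2})$, and for a representable integer $M$ of size $\sim p^{L/2}$ the typical value of $r_2(M)$ is $\Theta(\sqrt L)$ by Landau--Ramanujan, so dividing gives $\Theta(p^{L/2}/\sqrt L)$, a factor $\sqrt L$ above the conjecture. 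To be fair, the paper's own heuristic has the same ambiguity (it only proves a lower bound of $p^{L/2}/L$ by counting one grid point per representable residual while noting the average fiber is $\Omega(\sqrt L)$), so neither heuristic pins down the exact power of $L$. Since, as the paper's footnote notes, a version with polylogarithmic loss already suffices for the expected-polynomial-time claim, this discrepancy is benign for the application, but it is worth acknowledging rather than asserting that the Landau--Ramanujan averaging ``supplies the denominator $L$''.
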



We conclude this subsection with number theory and experiments that support the conjecture.
Define the set
\begin{equation*}
sn(N,\Delta)=\{ a^2+b^2\ |\ a,b \in \mathbb Z, (a,b) \in R(N,\Delta)\}.
\end{equation*}
It is easy to see that
\begin{equation*}
sn(N,\Delta) \subset [p^L - 2 \Delta \, p^{L/2}, p^L].
\end{equation*}
If $2 \Delta \, p^{L/2} < p^L$, then the conditions of Thm 1 in Ref.~\onlinecite{Hooley} are satisfied and the Corollary implies that the cardinality of the set is
\begin{equation*}
card\left(sn(N,\Delta)\right) = \Theta\left(\frac{p^{L/2}}{\sqrt{\log(p^L)}}\right) = \Theta\left(\frac{p^{L/2}}{L^{1/2}}\right).
\end{equation*}
Thus, there are as many distinct circumferences in $R(N,\Delta)$ that contain integer grid points, implying that the number of integer grid points on any one of these circumferences is $\Omega(L^{1/2})$ on average.

We further note that the set
\begin{equation*}
v(L) = \{p^L - a^2 - b^2\ |\ a,b \in \mathbb Z, (a,b) \in R(N,\Delta) \}
\end{equation*}
has cardinality
\begin{equation*}
m = card\left(v(L)\right) = \Theta\left(\frac{p^{L/2}}{L^{1/2}}\right).
\end{equation*}
Values from $v(L)$ are contained in the interval $[0,2\, \Delta \, p^{L/2}]$.
The average density of integers in that segment that are representable as a sum of two squares of integers is $\Theta(\sqrt{log(N)}) = \Theta(\sqrt{L})$ \cite{Landau}.
Assuming that the density of such integers across the set $v(L)$ is the same, we infer from the assumption that there are $m/\sqrt{L}= \Theta(p^{L/2}/L)$ values in $v(L)$ that are so representable, and hence at least as many integer grid points $(a,b) \in R(N,\Delta)$ that are projections of some four square decomposition of $p^L$ ( i.e., such that there exist $c,d \in \mathbb Z$ with $p^L=a^2+b^2+c^2+d^2$).

To verify the statement (1) of Conjecture \ref{conj:1},
we ran extensive computer simulations for $p=5$ and $L=\{16,...,28\}$, and for $p=13$ and $L=\{12,...,18\}$, using Mathematica infinite precision integer arithmetic. 
and observed behavior consistent with the conjecture.
To motivate statement (2) of Conjecture \ref{conj:1}, we tested the polar angles of points in $Pr_{x,y}(s_4(N))$ for uniformity.
The simulation covered $N=5^{16}, ...,5^{28},13^{12},...,13^{18}$ and tested the null hypothesis that the distribution of the polar angles is uniform.
Based on Kolmogorov-Smirnov statistics, the null hypothesis could not be rejected at any meaningful level of significance.

\subsection{The Algorithm}


We now present the expected-polynomial time algorithm.
We begin by approximating an arbitrary $Z$-rotation with a $\langle\mathcal{C}+V \rangle$ circuit.


\begin{Problem}
\label{axialProblem}
Given a $Z$-rotation $G=R_Z(\theta)$ and a small enough \footnote{Although we do not have a closed form bound on how small $\epsilon$ should be, our algorithm works well in practice for $\epsilon < 2*5^{-4}=0.0032$.} target precision $\epsilon$, synthesize a $\langle\mathcal{C}+V \rangle$ circuit $c(G,\epsilon)$ such that
\be
dist(c(G,\epsilon), G) < \epsilon
\ee
and
\be
V_c(c(G,\epsilon))) \leq 4 \, \log_5(2/\epsilon)).
\ee
\end{Problem}

\begin{thm}
\label{thm:randalg}
There exists a randomized algorithm that solves Problem \ref{axialProblem} in expected time polynomial in $log(1/\epsilon)$.
\end{thm}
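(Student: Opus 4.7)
The plan is to reduce Problem~\ref{axialProblem} to the problem of finding an integer 4-tuple $(a,b,c,d)$ with $a^2+b^2+c^2+d^2=5^L$ whose associated unitary $U=(aI+biX+ciY+diZ)/5^{L/2}$ lies within trace distance $\epsilon$ of $R_Z(\theta)$, after which Theorem~\ref{Proposition 1} and Algorithm~\ref{alg:sub} synthesize $U$ exactly as a $W$-circuit --- and hence a $\langle\mathcal{C}+V\rangle$ circuit --- of $V$ count $L$. A direct trace computation yields $|tr(U R_Z(\theta)^\dagger)|/2 = |a\cos(\theta/2)-d\sin(\theta/2)|/5^{L/2}$, so $dist(U,R_Z(\theta))$ is small precisely when the integer point $(a,-d)$ is both close in Euclidean norm to $5^{L/2}$ and nearly aligned with the unit vector $(\cos(\theta/2),\sin(\theta/2))$. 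This is exactly the condition that $(a,-d)$ lies in a circular segment $A(5^L,\Delta,P_+)$ whose tangent is perpendicular to the target axis, together with the side condition that the residue $m=5^L-a^2-d^2$ be expressible as $b^2+c^2$ for some integers $b,c$.

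Next, I would take the smallest even $L$ with $L\ge 4\log_5(2/\epsilon)$ and set $\Delta=2$; a short geometric calculation on the segment then gives $dist(U,R_Z(\theta))\le\epsilon$ and the $V$-count bound $V_c\le 4\log_5(2/\epsilon)$ immediately. The randomized procedure will repeatedly sample integer points $(a,d)$ uniformly from the $\Theta(5^{L/4})$ lattice points of $A(5^L,2,P_+)$, compute $m$, and apply a Miller--Rabin primality test to detect the event that $m$ is prime with $m\equiv 1\pmod 4$. Whenever that event occurs, Cornacchia's algorithm produces $b,c$ with $b^2+c^2=m$ in time polynomial in $L$, and Algorithm~\ref{alg:sub} then completes the synthesis. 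Each individual trial therefore costs only $\mathrm{poly}(\log(1/\epsilon))$.

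The main obstacle --- and the only nonelementary step --- is to bound the expected number of trials by a polynomial in $L$. This is precisely where Conjecture~\ref{conj:1} enters: statement~(2) guarantees that $\Theta(5^{L/4}/L)$ of the $\Theta(5^{L/4})$ integer points in the segment arise as projections of genuine four-square decompositions of $5^L$, giving a per-trial probability of order $1/L$ for finding a usable 4-tuple at all. Restricting to the subclass on which the two-square test above is efficient (namely $m$ prime and $\equiv 1\pmod 4$) costs only a further $\Theta(1/\log N)=\Theta(1/L)$ factor by the Landau-style prime-density estimates already discussed in the excerpt, so the per-trial success probability remains $\Omega(1/L^2)$ and the total expected running time is polynomial in $\log(1/\epsilon)$. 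Everything else --- the trace distance formula, the area and integer-counting estimates for $A(N,\Delta,P_+)$, and the invocation of Algorithm~\ref{alg:sub} --- is routine, so the theorem is ultimately conditional on a quantitative form of Conjecture~\ref{conj:1}, exactly as the authors anticipate.
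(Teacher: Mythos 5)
Your skeleton matches the paper's proof almost exactly: reduce the $Z$-rotation approximation to finding a lattice point $(a,d)$ in a dilated circular segment of $S^1(5^{L/2})$ that is the $(x,y)$-projection of a four-square decomposition of $5^L$, invoke Conjecture~\ref{conj:1}(2) to bound the density of such projection points, complete the tuple to $(a,b,c,d)$, and hand off to Theorem~\ref{Proposition 1}/Algorithm~\ref{alg:sub} for exact synthesis. The choice of $L$ (equivalently $\Delta\approx 2$), the trace calculation (your sign on $d$ is a harmless convention), and the overall expected-polynomial accounting all agree with the paper.

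The one genuine gap is your substitution of \textquotedblleft Miller--Rabin to detect $m$ prime, $m\equiv 1\ (\mathrm{mod}\ 4)$, then Cornacchia\textquotedblright\ for the paper's direct invocation of the Rabin--Shallit randomized test for representability as a sum of two squares, applied to \emph{arbitrary} $m$. Rabin--Shallit runs in expected time $O(\log^2 m\,\log\log m)$ without any primality restriction, so the paper only needs Conjecture~\ref{conj:1} itself: the $\Theta(5^{L/4}/L)$ projection points are exactly the usable $(a,d)$, giving per-trial success probability $\Theta(1/L)$. By restricting to prime $m$ you need a further assertion --- that a fraction $\Theta(1/L)$ of the specific residues $m = 5^L - a^2 - d^2$ arising from lattice points in the segment are prime --- and this does \emph{not} follow from Conjecture~\ref{conj:1}, which says nothing about primality, nor from the Landau estimate cited in the paper, which concerns the density of sums of two squares in an interval, not primes in a sparse structured set. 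Your estimate may well be true heuristically (and indeed the paper's Mathematica implementation uses exactly this shortcut), but as a proof step it imports an additional unstated conjecture. The fix is simply to use Rabin--Shallit on general $m$, as the paper does, which removes the extra $\Theta(1/L)$ factor and the unjustified density assumption at once.
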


We first present geometry that relates the theorem to the Conjecture \ref{conj:1} with $p=5$.
Our goal is to select the target circuit depth value $L$ such that
\begin{equation}
\label{eq7}
\epsilon < 2 * 5^{-L/4}.
\end{equation}

Having found the smallest integer $L$ satisfying Eq (\ref{eq7}),
we then represent $G$ as $G = \cos\left(\frac{\theta}{2}\right) \, I + i \, \sin\left(\frac{\theta}{2}\right) Z$ and consider approximating it with
\begin{equation*}
U=\left( a\, I+b\, i\,X+c\, i\, Y+d\, i\, Z \right)5^{-L/2},
\end{equation*}
as suggested by Thm \ref{Proposition 1}.

Approximating $G$ to precision $\epsilon$ in the trace distance metric is equivalent to finding $U$ such that
\begin{equation*}
\left(a\cos\left(\frac{\theta}{2}\right)+d\sin\left(\frac{\theta}{2}\right)\right)5^{-L/2} > 1 - {\epsilon}^2.
\end{equation*}

For convenience we note that, without loss of generality, it suffices to prove the theorem for $-\pi/2 < \theta < \pi/2$ since we can always rotate the target gate to a position within this interval using $R_Z( \pm \pi/2)$ rotations from the Clifford group.
We also note that our selection of $L$ ensures that $5^{L/4} \, {\epsilon} \sim 2$.

Denote by $A_{\epsilon}(\theta)$ the segment of the unit disk where $(x\cos(\frac{\theta}{2})+y\sin(\frac{\theta}{2})) > 1 - {\epsilon}^2$.
Let $D(L)$ be an isotropic dilation of the plane with coefficient $5^{L/2}$.
Then the area of $D(L)[A_{\epsilon}(\theta)]$ is
\begin{equation*}
\mathcal{A}\left(D(L)[A_{\epsilon}(\theta)]\right) = 5^L \frac{4\sqrt{2}}{3} \epsilon^3  \sim 8\frac{4\sqrt{2}}{3} 5^{L/4}.
\end{equation*}

Define the angle $\phi = \sqrt{2} \epsilon (1-\epsilon^2/4)$ and the interval
\begin{equation*}
I_w(\epsilon,\theta) = \left(5^{L/2} \sin\left(\frac{\theta}{2}-\phi\right), 5^{L/2} \sin\left(\frac{\theta}{2}+\phi\right)\right)
\end{equation*}
with subinterval
$\left(5^{L/2} \sin\left(\frac{\theta}{2}- \epsilon\right), 5^{L/2} \sin\left(\frac{\theta}{2}+\epsilon\right)\right)$.

The length of the latter is approximately $2 * 5^{L/2} \cos\left(\frac{\theta}{2}\right) \epsilon \geq 2\sqrt{2} * 5^{L/4}$ and it contains approximately at least as many integer values.

Given any integer $a$ such that
\begin{equation*}
5^{L/2}\sin\left(\frac{\theta}{2}- \epsilon\right) < a < 5^{L/2}\sin\left(\frac{\theta}{2}+\epsilon\right)
\end{equation*}
we derive geometrically that the intersection of the horizontal line $w=a$ with $D(L)[A_{\epsilon}(\theta)]$ is a straight line segment that is longer than $5^{L/2}\frac{\epsilon^2}{2} \geq 2$ and that it contains at least two integer grid points.

We are now ready to prove the theorem.

\begin{proof}

Revisiting the notations of the previous subsection, we introduce the set of all representations of $5^L$ as a sum of squares of four integers
\begin{equation*}
s_4(5^L)=\{(x,y,z,w)\in {\mathbb Z}^4\ |\ x^2+y^2+z^2+w^2 = 5^L\}.
\end{equation*}

The key step in the algorithmic proof below is finding a point $(a,d)$ in the intersection of $Pr_{x,y}(s_4(5^L))$ and $D(L)[A_{\epsilon}(\theta)]$.

Once such a point is found we can use a Rabin-Shallit algorithm \cite{RS} to express $5^L -a^2-d^2$ as $b^2+c^2, b,c \in \mathbb Z$.
Then  $U=\left( a\, I+b\, i\,X+c\, i\, Y+d\, i\, Z \right)5^{-L/2}$ would be the desired approximation of $G$, that can be represented precisely as a $W$ circuit in at most $L$ quaternion division steps.

Consider a horizontal line $w=a$, where $a \in I_w(\epsilon,\theta)$.
By simple geometric calculation we find that the intersection of this line with the $D(L)[A_{\epsilon}(\theta)]$ segment is a line segment that is at most $5^{L/2} {\epsilon}^2 / \cos\left(\frac{\theta}{2}\right) \leq 5^{L/2} \, \sqrt{2} \, {\epsilon}^2$ long.

For our choice of $L$ this maximum length is approximately $4\sqrt{2}$ and thus the line segment contains at most 5 points with integer first coordinate.
On the other hand, we have shown earlier that for
\begin{equation*}
5^{L/2} \sin\left(\frac{\theta}{2}- \epsilon\right) < a < 5^{L/2} \sin\left(\frac{\theta}{2}+\epsilon\right)
\end{equation*}
the intersection of the $w=a$ line with $D(L)[A_{\epsilon}(\theta)]$ is a line segment that is longer than 2 and must contain at least 2 points with integer $z$ coordinate.
In other words, if $a \in I_w(\epsilon,\theta)$ is a randomly selected integer, then with probability at least $1/\sqrt{2}$ the intersection segment contains at least 2 integer grid points.

Algorithm \ref{alg:subroutine} gives the randomized approximation algorithm.
%
\begin{algorithm}[H]
\caption{Randomized Approximation}
\label{alg:subroutine}
\algsetup{indent=2em}
\begin{algorithmic}[1]
\REQUIRE{Accuracy $\epsilon$, angle $\theta$}
\STATE{completion $\leftarrow$ null}
\STATE{$Sw \, \leftarrow$ set of all integers in $I_w(\epsilon,\theta)$}
\WHILE{completion $==$ null \AND $Sw \ne \emptyset$}
\STATE{Randomly, pick an integer $a$ from $Sw$}
\STATE{$Sw \, \leftarrow \, Sw - \{a\}$}
\FORALL{integer $d$ such that $ (d,a) \in D(L)[A_{\epsilon}(\theta)]$}
\IF{exist $b,c \in \mathbb Z$\\
such that $5^L-a^2-d^2=b^2+c^2$}
\label{line}
\STATE completion $\leftarrow$ $(b,c)$
\STATE{Break;}
\ENDIF
\ENDFOR
\ENDWHILE
\IF {completion==null}
\RETURN{ null;}
\ENDIF
\STATE{$b \, \leftarrow$ first(completion)}
\STATE{$c \, \leftarrow$ last(completion)}
\RETURN{$U=\left( a\, I+b\, i\,X+c\, i\, Y+d\, i\, Z \right)5^{-L/2}$}
\end{algorithmic}
\end{algorithm}

In the worst case the algorithm terminates by exhausting the $\Theta(5^{L/4})$
candidate points in the $D(L)[A_{\epsilon}(\theta)]$ segment.
However, we note that this segment is that of Conjecture \ref{conj:1} for $p=5$.
Therefore the share of satisfactory candidates among all of the integer grid points $D(L)[A_{\epsilon}(\theta)]$ is $\Theta(1/L)$.
Thus the algorithm will terminate in $O(L)$ iterations on average.

Since the average overall number of iterations is moderate, the largest cost in the algorithm is line \ref{line}.
It has been shown by Rabin and Shallit \cite{RS} that the effective test for an integer $v$ to be a sum of squares of two integers has expected running cost of $O\left(\log^2(v) \, \log(\log(v))\right)$.
In our case $v \leq 8 * 5^{L/2}$ and we estimate the expected cost of the step as $O\left(L^2 \, \log(L)\right)$.
Therefore the overall expected cost of the algorithm is  $O\left(L^3 \, \log(L)\right)$ which translates into $O\left(\log(1/\epsilon)^3 \, \log(\log(1/\epsilon))\right)$.

\end{proof}

\subsection{Experimental Results}

We have implemented Algorithm \ref{alg:subroutine} from Thm \ref{thm:randalg} in Mathematica.
Our implementation has the following simplifications:
\begin{itemize}
\item Line \ref{line} has been redefined to return ${\tt PrimeQ}[5^L-a^2-d^2]$ for even $a$ and $d$ and to return \textrm{false} otherwise.\footnote{{\tt PrimeQ} is Mathematica primality test that does not require complete factorization of the integer being tested. Mathematica is a registered trademark of Wolfram Research, Inc.}

\item Given a desired $V$ count $V_c$, the algorithm terminates whenever a random candidate at distance less than $2 * 5^{-V_c/4}$ from the target is picked.
\end{itemize}
We
implicitly used the Rabin primality test since it is in general faster than complete integer factorization.
We ran our Mathematica solution over a set of 1000 random axial unitary rotations at 17 different circuit $V_c$ levels.
The test statistics are presented in Figure \ref{fig:Vpolynomial}.
The solid blue line represents the interpolated average precision achieved over the test set.
The sizes of the markers are proportional to the standard deviations of the precision at each level.
The dashed red line shows the theoretical precision bound of $2 * 5^{-V_c/4}$.
Note that the tight match between the theoretical estimate and experimental results is not very insightful since the algorithm has been designed to terminate as soon as the theoretical precision has been achieved.

\begin{figure}[tb]
  \centering
  \includegraphics[width=3.5in]{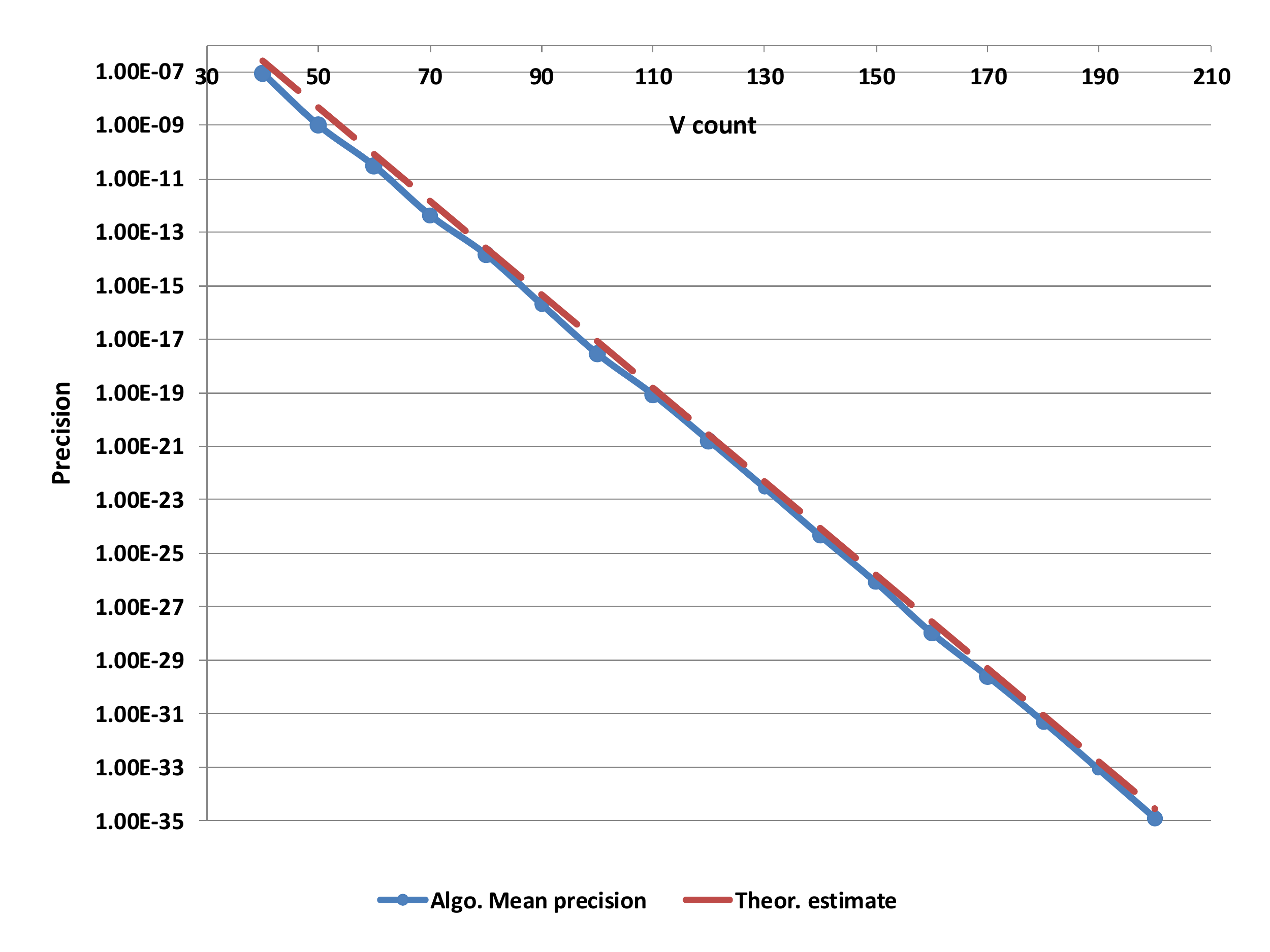}
  \caption[Clifford+V synthesis precision statistics]{$V$ count versus mean precision $\epsilon$ (measured in trace distance).
  Results are presented for $1000$ random axial rotations at 17 values of $V$ count.
  Solid blue line: interpolated average precision.
  Dashed red line: theoretical bound on precision, $2 * 5^{-V_c/4}$.
Marker sizes are proportional to the standard deviations of the precision at each $V$ count.
}
\label{fig:Vpolynomial}
\end{figure}

The algorithm can be used for approximate decomposition of any single-qubit unitary into a $\langle\mathcal{C}+V\rangle$ circuit
since any $G \in SU(2)$ can be decomposed exactly into three axial rotations, and the algorithm can be applied to each axial component.
The $V$ count in this case will scale as
\begin{equation}
\label{eq8}
V_c \leq 12\log_5(2/\epsilon).
\end{equation}

For the majority of unitary gates, we can significantly reduce the depth of the output circuit with a corresponding increase in compilation time.
The $V$ count estimate given in Eq (\ref{eq8}) reflects the tripling of the circuit depth due to decomposition of the target unitary into three axial rotations.
An alternative approach would be to perform a direct search in the four-dimensional integer grid; this will be the basis for our second algorithm described in Section \ref{sec:directalg}.

\subsection{A Possible Generalization}

A foundation for efficient circuit synthesis over the $V$ basis is the set of quaternions of norm $5^L$ and a body of number theory facts and conjectures related to that set.
Given an integer prime $p$ such that $p=1 \mod 4$, it is apparent that most of these facts and observations generalize to quaternions of norm $p^L$, which are generated by the primitive ones of norm $p$.
Modulo Lipschitz units there are $p+1$ such quaternions in the generator set.
These correspond to a basis of $p+1$ unitary operators that we denote $V(p)$.
Together with the Pauli gates a subset of $(p+1)/2$ of the $V(p)$ operators generate the generalization of the $W$ circuits.

However, in the case of $p=5$, it was sufficient to add only one $V$ operator in order to ensure the asymptotic uniformity of the grid of $\langle \mathcal{C}+\{V\} \rangle$ circuits.
For $p>5$, additional independent $V(p)$ operators are required.

For example, when $p=13$ the following gates are required, in addition to the Clifford group:
\begin{eqnarray*}
V_1(13) &=& (2 \,I + 3 \, i \, Z)/\sqrt{13}, \\
V_2(13) &=& (I + 2 \, i (X+Y+Z))/\sqrt{13}, \\
V_3(13) &=& (2 \, I + i\, X + 2 \, i (Y+Z))/\sqrt{13}, \\
V_4(13) &=& (2 \, I + i\, Y + 2 \, i (X+Z))/\sqrt{13},\\
V_5(13) &=& (2 \, I + i\, Z + 2 \, i (X+Y))/\sqrt{13}
\end{eqnarray*}

A generalization of Thm \ref{thm2} characterizes the gates representable exactly in the $\langle \mathcal{C}+\{V(p)\} \rangle$ basis as normalizations of Lipschitz quaternions of norm $p^L, L \in \mathbb Z$.
The exact synthesis of the corresponding circuit for a unitary of the form
\begin{equation*}
U=(a \, I + b \, i\, X + c \, i\, Y + d \, i\, Z)/p^{L/2}
\end{equation*}
amounts to a generalization of Algorithm \ref{alg:sub} and requires at most $(p+1)*L$ quaternion divisions.

Thm \ref{thm:randalg} also generalizes to $\langle \mathcal{C}+\{V(p)\} \rangle$ circuits,
to the extent that Conjecture \ref{conj:1} holds for the prime parameter $p$, and the circuit depth estimate from the theorem generalizes to an estimate of the form $L \leq 4\log_p(2/\epsilon)$.

We have chosen to focus on the $V(5)$ case for two reasons.
First, the basis requires only one non-Clifford gate for which we have a fault-tolerant implementation protocol
Second, we have so far only collected empirical data for $p=5$.

\section{Direct Search Approximation Algorithm}
\label{sec:directalg}


In this section, we present an algorithm based on optimized brute-force search for decomposing single-qubit unitaries into a circuit in the set $\langle \mathcal{P} + V \rangle$,
where $\mathcal{P}$ is the set of single-qubit Pauli gates and $V$ is one of the $V$ gates.
We first present relevant background.
%

\subsection{Vicinity of a Unitary in $PSU(2)$ as a Spherical Cap}

We begin by characterizing an $\epsilon$-neighborhood of a single qubit unitary as a ``spherical cap" in a 3-dimensional sphere $S^3$, i.e., as a portion of the sphere to one side of a certain 3-dimensional hyperplane in the 4-dimensional Euclidean space.

Consider the 4-dimensional Euclidean space with standard coordinates $\alpha, \beta, \gamma, \delta$.

Let
\begin{equation*}
S^3(R) = \{ (\alpha, \beta, \gamma, \delta)\ |\ \alpha^2+\beta^2+\gamma^2+\delta^2=R^2 \}
\end{equation*}
be  the 3-dimensional sphere of radius $R$ centered at the origin.
For any point on $S^3(R)$ we generate the unitary
\begin{equation*}
\nu(\alpha, \beta, \gamma, \delta) = (\alpha \, I + i \, \beta \, X + i \, \gamma \, Y + i \, \delta \, Z)/R \in SU(2).
\end{equation*}

The quantum gate group $PSU(2)$ is the central quotient of $SU(2)$ with the exact sequence
$1 \rightarrow {\mathbb Z}_2 \rightarrow SU(2) \rightarrow PSU(2) \rightarrow 1$, therefore $\nu$ defines a ${\mathbb Z}_2$ covering of $PSU(2)$
(which is the same factorization that is commonly used to glue an $S^3$ into 3-dimensional projective space).

Under this covering the $PSU(2)$ unitaries with nonzero trace are in one to one correspondence with the ``northern" hemisphere
\begin{equation*}
S^3_{+}(R) = \{ (\alpha, \beta, \gamma, \delta) \in S^3(R) | \alpha > 0 \}.
\end{equation*}

Thus given a gate $G,\ |tr(G)| > 0$, then a small enough $\epsilon$-vicinity of that gate
\begin{equation*}
c_{\epsilon}(G) = \{U \in PSU(2) |\ dist(U,G) < \epsilon\}
\end{equation*}
is unambiguously identified with a spherical cap in $S^3_{+}(R)$.

To clarify, consider $G=\nu(\alpha, \beta, \gamma, \delta)$  and define
\begin{eqnarray*}
C_{\epsilon}(G) = \{ (\alpha', \beta', \gamma', \delta') \in S^3_{+}(R)\ | \\
 \alpha\,\alpha'+ \beta\,\beta'+ \gamma \,\gamma'+ \delta\,\delta' > R(1-\epsilon^2)\}.
\end{eqnarray*}
Then $C_{\epsilon}(G)$ is a portion of $S^3(R)$ bounded by the hyperplane
\begin{equation*}
\alpha\,\alpha'+ \beta\,\beta'+ \gamma \,\gamma'+ \delta\,\delta' = R(1-\epsilon^2)
\end{equation*}
and $\nu(C_{\epsilon}(G)) = c_{\epsilon}(G)$.

We will focus further calculations on the $\epsilon$-neighborhoods that do not contain zero-trace gates and thus correspond to spherical caps completely contained in $S^3_{+}(R)$.
It is trivial to modify all of the equations to cases where an $\epsilon$-neighborhood intersects the zero-trace ``equator".

Given a $C_{\epsilon}(G)$ that is completely contained in $S^3_{+}(R)$, it is easy to derive, geometrically, that the metric volume $\mathcal{V}$ of $C_{\epsilon}(G)$ is
\begin{eqnarray*}
\mathcal{V}(C_{\epsilon}(G)) &= & 4\pi R^3 \int_{0}^{\cos^{-1}(\epsilon')} \sin^2(\eta)  d \eta \\
 &=& 2\pi R^3 \left(\cos^{-1}(\epsilon') - \frac{1}{2} \sin(2 \cos^{-1}(\epsilon'))\right),
\end{eqnarray*}
where $\epsilon' = 1 - \epsilon^2$.

Taking the Taylor series expansion of the latter at $\epsilon=0$, we find that
\begin{equation*}
\mathcal{V}(C_{\epsilon}(G)) = \frac{8 \pi \sqrt{2} \epsilon^3 R^3}{3} + O(\epsilon^5).
\end{equation*}
In the next sections we focus on precision targets $\epsilon$ for which the $C_{\epsilon}(G)$ neighborhoods have sufficient metric volume.

\subsection{A Bound for Uniform Precision}
We start by establishing that there exist unitary gates in $PSU(2)$ that cannot be approximated by $W$-circuits of $V_c \leq L$ to a precision better than $\epsilon_L = 5^{-L/4}/2$.
This is based on the following observation:
\begin{observation}
Let $w$ be a $W$-circuit different from the identity with $V_c(w) \leq L$, then it evaluates to $U(w)$ with
$|tr(U(w))| \leq 2(1-5^{-L/2})$ and $U(w)$ is at least $5^{-L/4}$ away from the identity.
\end{observation}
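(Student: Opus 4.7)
The plan is to invoke Theorem~\ref{Proposition 1} to put $U(w)$ into a canonical integer-parameterized form, and then to exploit the integrality of the four coefficients together with the exclusion of the identity to produce both advertised bounds.

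By Theorem~\ref{Proposition 1}, since $V_c(w)\le L$, we may write
\[
U(w) \;=\; (a\,I + b\,i\,X + c\,i\,Y + d\,i\,Z)\,5^{-L/2},
\]
for integers $a,b,c,d$ with $a^{2}+b^{2}+c^{2}+d^{2}=5^{L}$. Since $tr(I)=2$ and $tr(iX)=tr(iY)=tr(iZ)=0$, we read off $|tr(U(w))|=2|a|/5^{L/2}$, so both claims reduce to establishing the arithmetic inequality $|a|\le 5^{L/2}-1$.

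To obtain that bound, first observe that $U(w)=I$ in $PSU(2)$ is equivalent to $U(w)=\pm I$ in $SU(2)$, which forces $(b,c,d)=(0,0,0)$ and $a=\pm 5^{L/2}$. Since $w$ is assumed distinct from the identity circuit, this configuration is excluded, so at least one of $b,c,d$ is a nonzero integer and $b^{2}+c^{2}+d^{2}\ge 1$; the sum-of-squares constraint then gives $a^{2}\le 5^{L}-1$. For $L$ even, so that $5^{L/2}\in\mathbb{Z}$, integrality of $a$ upgrades this to $|a|\le 5^{L/2}-1$. The trace bound $|tr(U(w))|\le 2(1-5^{-L/2})$ is now immediate, and the trace-distance formula applied to the pair $(U(w),I)$ gives
\[
dist(U(w),I)^{2} \;=\; 1 - |tr(U(w))|/2 \;\ge\; 5^{-L/2},
\]
whence $dist(U(w),I)\ge 5^{-L/4}$. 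Tightness is witnessed, e.g., for $L=2$ by $(a,b,c,d)=(4,3,0,0)$, which saturates both inequalities.

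The main subtlety is the passage from $a^{2}\le 5^{L}-1$ to $|a|\le 5^{L/2}-1$. This step is clean when $L$ is even, since $5^{L/2}$ is then an integer and any strictly smaller integer is at most $5^{L/2}-1$. For odd $L$, $5^{L/2}$ is irrational and an integer $a$ can in principle lie in the half-open interval $(5^{L/2}-1,\,5^{L/2})$; I would therefore either restrict attention to even $L$, which is natural in the paper's subsequent use of the observation (where $L$ plays the role of a target circuit depth that grows with the precision target), or else replace $5^{L/2}-1$ by $\lfloor\sqrt{5^{L}-1}\rfloor$ and propagate the resulting slightly weaker but still $\Theta(5^{-L/4})$ lower bound on the distance to the identity.
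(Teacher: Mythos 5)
Your proof follows the same route as the paper's (the paper's argument is essentially the single sentence ``Since $U(w)$ is not the identity, $|a|$ cannot be greater than $5^{L/2}-1$''), and your expansion of it is correct. More importantly, you have caught a real gap that the paper silently glosses over: the inference from $a^2\le 5^{L}-1$ to $|a|\le 5^{L/2}-1$ genuinely requires $L$ to be even, and the Observation as literally stated fails for odd $L$. Concretely, for $L=1$ take $w=XV_1$, so that in $PSU(2)$ one has $U(w)=(2I-iX)/\sqrt{5}$ with $a=2$; then $|tr(U(w))|=4/\sqrt{5}\approx 1.79 > 2(1-5^{-1/2})\approx 1.11$ and $dist(U(w),I)\approx 0.33 < 5^{-1/4}\approx 0.67$, violating both claims. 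Your fix of restricting to even $L$ (consistent with how $L$ is used downstream as a target circuit depth derived from the precision budget), or equivalently replacing $5^{L/2}-1$ by $\lfloor\sqrt{5^{L}-1}\rfloor$, is the right repair and should be made explicit in the statement of the Observation.
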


Indeed
\begin{equation*}
U(w) = \frac{a}{5^{L/2}} I + \frac{i \,(b\, X + c \, Y +  d \, Z)}{5^{L/2}},\ a,b,c,d \in \mathbb Z.
\end{equation*}
Since $U(w)$ is not the identity, $|a|$ cannot be greater than $5^{L/2}-1$.

Now, let $P \in \{I,X,Y,Z\}$ be a Pauli gate.
\begin{observation}
A circuit $w$ with $V_c(w) \leq L$ and distinct from $P$ evaluates to $U(w)$
with a distance at least $5^{-L/4}$ from $P$.
\end{observation}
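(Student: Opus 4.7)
The plan is to reduce the claim to the preceding observation via the symmetry that Pauli multiplication acts on $W$-circuits. Given $w$ as in the hypothesis, I would consider the circuit $w' = P\,w$ obtained by pre-pending $P$ to $w$. Since $P \in \{I, X, Y, Z\}$ is itself in the $W$ generating set and, in $PSU(2)$, is self-inverse, $w'$ is a valid $W$-circuit. Its $V$ count satisfies $V_c(w') \leq V_c(w) \leq L$, because the $V$ count only tracks $V$ gates and is unaffected by Pauli factors. Moreover, since $w$ is distinct from $P$ as an element of $PSU(2)$, we have $U(w') = P\,U(w) \neq I$.

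Having reduced to a nontrivial circuit of bounded $V$ count, I would invoke the preceding observation on $w'$, obtaining $dist(U(w'), I) \geq 5^{-L/4}$. The last step is to transfer this bound to a statement about $dist(U(w), P)$ using the left unitary-invariance of the trace distance, which follows immediately from $dist(A,B) = \sqrt{1 - |tr(AB^\dagger)|/2}$ and the cyclic property of the trace. Applying this with $P$ yields
\begin{equation*}
dist(U(w), P) \;=\; dist(P\, U(w), P^2) \;=\; dist(U(w'), I) \;\geq\; 5^{-L/4}.
\end{equation*}

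There is no genuine obstacle in this argument. The only points to verify are: that left-multiplication by a Pauli preserves the $V$ count (immediate from the definition of $V_c$); that $U(w) \neq P$ in $PSU(2)$ translates to $U(w') \neq I$ via $P^2 = I$; and the unitary invariance of $dist(\cdot,\cdot)$. Each of these is a one-line check, and the preceding observation does all of the real work.
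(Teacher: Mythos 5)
Your proposal is correct and is essentially the argument the paper gives: the paper considers $w\cdot P$ (right multiplication) rather than $P\cdot w$, but in both cases the point is that multiplying a $W$-circuit by a Pauli preserves the $V$ count, maps a circuit distinct from $P$ to one distinct from the identity, and preserves the trace distance, so the previous observation applies directly. The two variants differ only in a trivial left/right choice.
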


Indeed, if $w$ is a $W$-circuit at a certain distance from $P$ then $w.P$ is a circuit with the same $V$ count at the same distance from the identity.

Thus, if
$\epsilon < \epsilon_L = 5^{-L/4}/2$ and $G \in PSU(2)$ is any unitary such that
\begin{equation*}
\epsilon < dist(G,P)< 2 \, \epsilon_L - \epsilon,
\end{equation*}
then there are no $W$-circuits of $V_c \leq L$ within distance $\epsilon$ from $G$ by the triangle inequality for $dist$:
\begin{equation*}
\forall w,\ dist(w,G) \geq dist(w,P) - dist(G,P) > \epsilon.
\end{equation*}

On the other hand, $dist(G,P)$ is also greater than $\epsilon$.
Therefore, the uniform precision guarantee cannot be better than $5^{-L/4}/2$ for $W$-circuits of $V_c \leq L$.
In other words, the uniform guarantee of optimal circuit depth cannot be better than $4 \log_5(1/\epsilon) - 4\log_5(2)$.

Revisiting the above discussions, we note that for $\epsilon < \epsilon_L = 5^{-L/4}/2$ there exist ``exclusion zones" of width $2(\epsilon_L - \epsilon)$ around each of the Pauli gates consisting of unitaries that cannot be approximated to precision $\epsilon$ by $W$-circuits with $V_c \leq L$.
Using the spherical cap volume formulae from the previous subsection, for $\epsilon$ significantly smaller than $\epsilon_L$, we estimate the combined volume of these exclusion zones, relative to the volume of the $S^3_{+}$ as $O(5^{-L/2}(5^{-L/4}-3\,\epsilon))$.

\subsection{A Working Conjecture}

Given the set of $W$-circuits with $V_c \leq L$, we will consider two key precision targets: $\epsilon_4(L) = 2 * 5^{-L/4}$ and
$\epsilon_3(L) = 5^{-L/3}$.

Consider the 3-dimensional hemisphere $S^3_{+}(5^{L/2})$.
As per the results from the previous subsection, for the metric volumes of the $\epsilon_4$- and $\epsilon_3$- neighborhoods we have
\begin{equation*}
\mathcal{V}(C_{\epsilon_4(L)}(G)) \sim 64 \pi\sqrt{2} 5^{3L/4} / 3
\end{equation*}
and
\begin{equation*}
\mathcal{V}(C_{\epsilon_3(L)}(G)) \sim 8 \pi\sqrt{2} 5^{L/2} / 3.
\end{equation*}

Since the volume of $S^3_{+}(5^{L/2})$ is equal to $\pi^2 5^{3L/2}$, the relative metric share that these neighborhoods occupy on the hemisphere are
\begin{equation*}
\frac{\mathcal{V}(C_{\epsilon_4(L)}(G))}{\mathcal{V}(S^3_{+}(5^{L/2}))} \sim \frac{64\sqrt{2} 5^{-3L/4}}{3 \pi}
\end{equation*}
and
\begin{equation*}
\frac{\mathcal{V}(C_{\epsilon_3(L)}(G))}{\mathcal{V}(S^3_{+}(5^{L/2}))} \sim \frac{8\sqrt{2}5^{-L}}{3 \pi},
\end{equation*}
respectively.

\begin{conjecture}
\label{conj:2}
(1) There exists a positive integer $L_4$ such that for any integer $L > L_4$ and any single-qubit gate $G$ there exists a $W$-circuit $w$ such that
\begin{equation*}
dist(G,w) \leq \epsilon_4(L).
\end{equation*}
(2) For large enough integer $L$  ($L > L_3$)  there exists an open subset
${\mathbb G}_3 \subset PSU(2)$ with metric volume $(1-o(1))\mathcal{V}(S^3_{+})$ (when $L \rightarrow \infty$)  such that for each
$G \in {\mathbb G}_3$ there exists a $W$-circuit $w$ with
\begin{equation*}
dist(G,w) \leq \epsilon_3(L).
\end{equation*}
\end{conjecture}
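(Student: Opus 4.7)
My plan is to reduce both parts of Conjecture~\ref{conj:2} to a quantitative equidistribution statement for the lattice set $s_4(5^L)$ on the sphere $S^3(5^{L/2})$. By Theorem~\ref{Proposition 1} and the central $\mathbb{Z}_2$ covering $S^3 \to PSU(2)$, the $W$-circuits of $V$ count at most $L$ correspond to the renormalizations $5^{-L/2}(a,b,c,d)$ for $(a,b,c,d) \in s_4(5^L)$. Using $|s_4(5^L)| = \Theta(5^L)$ from the excerpt and $\mathcal{V}(S^3(5^{L/2})) = 2\pi^2 \cdot 5^{3L/2}$, the average lattice density is $\Theta(5^{-L/2})$, and a spherical cap $C_{\epsilon}(G)$ has 3-volume $\Theta(\epsilon^3 \cdot 5^{3L/2})$. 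The expected number of lattice points per cap is thus $\Theta(5^L \epsilon^3)$, which equals $\Theta(5^{L/4}) \to \infty$ at $\epsilon = \epsilon_4(L)$ and exactly $\Theta(1)$ at $\epsilon = \epsilon_3(L)$; these two regimes drive the two parts of the conjecture.

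For part~(1), I would cover $S^3_{+}(5^{L/2})$ by spherical caps of angular radius $\sim \epsilon_4(L)$ and show that each cap meets $s_4(5^L)$. The tool is a spherical-harmonic expansion of a smooth bump adapted to the cap: the constant mode produces the main term $\Theta(5^{L/4})$, and higher modes are controlled by bounds on the spherical Weyl sums $\sum_{(a,b,c,d)\in s_4(5^L)} Y_k(a,b,c,d)/5^{L/2}$ for $k \geq 1$. For the arithmetic family $N=5^L$ these sums factor through Hecke eigenvalues for a congruence subgroup and admit square-root cancellation via the Ramanujan--Petersson bound, giving an equidistribution rate of the form $\Theta(5^L \epsilon_4^3)\bigl(1 + O(5^{-\delta L})\bigr)$ for some $\delta > 0$. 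Since the main term blows up while the error saves a power of $5$, every cap is nonempty once $L$ exceeds an explicit threshold $L_4$.

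For part~(2) the expected count per cap is $\Theta(1)$, so one cannot hope that every cap is hit; instead I would bound the measure of the bad set $\mathbb{B}(L) = \{G \in PSU(2) : C_{\epsilon_3(L)}(G) \cap s_4(5^L) = \emptyset\}$ by a second-moment argument. Writing $N_L(G)$ for the lattice count inside the cap around $G$, I would compute $\int N_L\,dG$ and $\int N_L^2\,dG$ using the same spherical-harmonic expansion and apply Chebyshev's inequality to show $\mathrm{vol}(\mathbb{B}(L)) = o(\mathrm{vol}(S^3_{+}))$. The Pauli-translated exclusion zones noted earlier in this section contribute a measure of order $O(5^{-L/2}(5^{-L/4} - 3\epsilon_3))$ that is absorbed into this $o(1)$ remainder, and the open set $\mathbb{G}_3$ can be taken to be the interior of $PSU(2)\setminus \mathbb{B}(L)$.

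The main obstacle is the equidistribution input itself. Off-the-shelf results of Linnik, Duke, and Iwaniec deliver equidistribution of integer points on $S^3(\sqrt{N})$ with a power-saving error, but one needs uniformity in the cap centre $G$ down to the critical scale $\epsilon_3(L) = 5^{-L/3}$, which sits essentially at the square-root barrier for the second-moment argument. An unconditional proof for the specific sequence $N = 5^L$ likely requires either a subconvex estimate for the attached family of $L$-functions, or a direct exploitation of the fact that $5 \equiv 1 \pmod 4$ splits in $\mathbb{Z}[i]$, so that $s_4(5^L)$ carries a rich Hecke-module structure from unique factorization of Lipschitz quaternions of prime norm $5$, allowing the spectral sums to be analyzed explicitly for this special family.
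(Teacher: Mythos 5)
The statement you have been asked to prove is labelled a \emph{conjecture} in the paper, and the authors do not prove it: they offer only heuristic support, namely (i) the count $|s_4(5^L)| = \Theta(5^L)$ from Jacobi's four-square formula, from which they compute the expected number of lattice points in caps of angular radius $\epsilon_4(L)$ and $\epsilon_3(L)$ to be $\Theta(5^{L/4})$ and $\Theta(1)$ respectively; (ii) a pointer to the Lubotzky--Phillips--Sarnak papers on the distribution of the group generated by the $V_i$; and (iii) the informal remark that the integer points on $S^3(5^{L/2})$ are observed numerically to be close to uniformly spread. Your proposal reproduces (i) exactly --- your expected-count bookkeeping ($\Theta(5^L\epsilon^3)$ per cap, blowing up at $\epsilon_4$ and settling to $\Theta(1)$ at $\epsilon_3$) is correct and matches the paper's own numbers --- but then goes well beyond what the paper actually does, laying out a concrete analytic strategy: spherical-harmonic expansion of smooth cap indicators, control of the resulting Weyl sums over $s_4(5^L)$ by Hecke-eigenvalue (Ramanujan--Petersson) bounds to prove every cap of radius $\epsilon_4(L)$ is hit for part (1), and a second-moment / Chebyshev argument over cap centres for part (2), with the Pauli exclusion zones correctly absorbed into the $o(1)$ error. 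This is the right framework --- it is essentially the Linnik/Duke/Iwaniec circle of ideas for equidistribution of lattice points on spheres --- and your identification of the critical difficulty (uniformity in the cap centre down to the square-root-barrier scale $\epsilon_3(L) = 5^{-L/3}$, requiring either subconvexity for the relevant family of $L$-functions or a direct exploitation of the Hecke-module structure coming from $5 \equiv 1 \pmod 4$) is accurate and candid. But as you acknowledge, the key equidistribution input at that scale is not established, so what you have is a credible research program rather than a proof. To be clear about the comparison asked for: neither you nor the paper proves Conjecture 2; what you have contributed is a precise roadmap that correctly locates the genuine arithmetic obstacle, which is considerably more than the heuristic counting and references the paper supplies for this statement.
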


The common motivation for both clauses of this conjecture is that the number of distinct $W$-circuits scales as $5^{V_c}$.
More specifically, there are approximately $5 * 5^L$ distinct unitaries in $PSU(2)$ that are represented exactly by $W$-circuits with $V_c \leq L$.

This stems from the fact that $5^L$ has exactly $10(5^L - 2)$ distinct decompositions into a sum of four squares of integers, which can be easily derived from the Jacobi formula for the $r_4$ function:
\begin{equation*}
r_4(n) = 8 \sum_{SC(d)}d , SC(d)= (d | n) \& (d \mod 4 \ne 0)
\end{equation*}
(see chapters on the $r(n)$ function in Ref.~\onlinecite{HW}).
Geometrically, there are exactly $10(5^L -2)$ distinct integer grid points on $S^3(5^{L/2})$ and the set of such grid points is central-symmetrical with respect to the origin, so approximately half of these integer grid points lie on the $S^3_{+}(5^{L/2})$ piece of the hemisphere.

Further intuition in support of the conjectures is drawn from \cite{LPSI,LPSII}, which investigates the distribution density of the elements of the free group generated by $\langle V_1, V_2, V_3, V_1^{-1}, V_2^{-1}, V_3^{-1}\rangle$.

A stronger special case of Conjecture \ref{conj:2} postulates that for any
\begin{equation*}
G=\nu(\alpha, \beta, \gamma, \delta) = (\alpha \, I + i \, \beta \, X + i \, \gamma \, Y + i \, \delta \, Z),
\end{equation*}
where $\alpha^2+\beta^2+\gamma^2+\delta^2 = 1$,  there is an integer grid point on $S^3(5^{L/2})$ within distance $ \leq 2$ of $(\alpha, \beta, \gamma, \delta)*5^{L/2}$.
Although we do not claim that this stronger statement is true for all unitaries $G$, the perceived near-uniformness of the distribution of the integer lattice grid points over $S^3(5^{L/2})$ for large enough $L$ makes it plausible for most unitaries.

\subsection{Algorithm Outline}
Our algorithm to address Problem \ref{unitaryProblem} below employs optimized direct search.

\begin{Problem}
\label{unitaryProblem}
Given an arbitrary single-qubit unitary $G \in PSU(2)$ and a small enough target precision $\epsilon$, synthesize a $W$ circuit $c(G,\epsilon)$ such that
\be
dist(c(G,\epsilon), G) < \epsilon
\ee
and the $V$ count of the resulting circuit is
\be
\label{eq:general}
V_c \leq 3 \, \log_5(1/\epsilon)
\ee
for the majority of target unitaries and
\be
\label{eq:edge}
V_c \leq 4 \, \log_5(2/\epsilon)
\ee
in edge cases.
\end{Problem}

Let $L$ be the intended $V$ count of the desired approximation circuit.
Given a target single-qubit unitary gate represented as $G=\alpha I + \beta i X+\gamma i Y+\delta i Z$,
in order to find integers $(a,b,c,d)$ such that $a^2+b^2+c^2+d^2=5^L$ and
\begin{equation}
\label{eq2}
dist(G,(aI + biX+ciY+diZ)5^{-L/2}) < \epsilon,
\end{equation}
we split the ${\alpha,\beta,\gamma,\delta}$ coordinates into two-variable blocks.
Let us assume that the split is given by ${(\alpha,\delta),(\beta,\gamma)}$.
For the approximation inequality in Eq (\ref{eq2}) to hold it is sufficient that
\begin{equation}
\label{eq3}
{(b\, 5^{-L/2} - \beta)}^2 + {(c\, 5^{-L/2} - \gamma)}^2 < \epsilon^2
\end{equation}
and
\begin{equation}
\label{eq4}
{(a\, 5^{-L/2} - \alpha)}^2 + {(d\, 5^{-L/2} - \delta)}^2 < \epsilon^2.
\end{equation}

Our goal is to achieve $\epsilon=5^{-L/3}$.
It is easy to see that there are approximately $\pi 5^{L/3}$ integer pairs satisfying each of the conditions in Eq (\ref{eq3}) and Eq (\ref{eq4}) for that $\epsilon$.
We can now sweep over all of the $(b,c)$ integer pairs and build a hash table of all of the $5^L-b^2-c^2$ differences occurring in the first set.
Then we can sweep over all of the $(a,d)$ integer pairs from the second set, in search of one for which $a^2+d^2$ occurs in the hash table.

Using number-theoretical considerations (see, for example, \cite{RS}), one can reduce the number of candidates considered in this direct search by a factor of approximately $\frac{LR}{2 \sqrt{L\, \ln(5)}}$ (where $LR$ is the Landau-Ramanujan constant). Thus, for $L=34$ the reduction factor is approximately $0.05$.

For target unitaries that cannot be approximated to precision $5^{-L/3}$, the algorithm iteratively triples the precision goal (which has an effect of expanding the search space at each iteration) until the satisfactory candidate is found.

The outline of the algorithm is given in Algorithm \ref{alg:dsearch}.
\begin{algorithm}[H]
\caption{Direct Search Approximation}
\label{alg:dsearch}
\algsetup{indent=2em}
\begin{algorithmic}[1]
\REQUIRE{Accuracy $\epsilon$, Target gate $G=\alpha I + \beta i X+\gamma i Y+\delta i Z$ }
\STATE{$L \leftarrow \, \lfloor 3*\log_5(1/\epsilon) \rfloor$ }
\STATE{hash $\leftarrow \, Dictionary\langle Integer, (Integer*Integer) \rangle$}
\STATE{bound$\pm \leftarrow \, 5^L \, (\sqrt{\alpha^2 + \delta^2}\pm \epsilon)^2$ }
\FORALL {$b,c \in \mathbb{Z}$ satisfying Eq (\ref{eq3})}
\IF {bound$- \leq 5^L - b^2 - c^2 \leq $bound$+$ \AND $5^L-b^2-c^2$ is decomposable into two squares}
\STATE{Add $(5^L-b^2-c^2, (b,c)) \rightarrow $hash}
\ENDIF
\ENDFOR
\STATE{completion $\leftarrow$ fail}
\FORALL {integer pairs $(a,d)$ satisfying Eq (\ref{eq4}))}
\IF {hash contains key equal to $a^2+d^2$}
\STATE{completion $\leftarrow (a,b,c,d)$}
\STATE{Break;}
\ENDIF
\ENDFOR
\IF{completion $\ne$ fail }
\STATE{completion $\leftarrow$ completion$.(I,i \, X, i \, Y, i \, Z)5^{-L/2}$ }
\ENDIF
\RETURN{completion}
\end{algorithmic}
\end{algorithm}

\subsection{Experimental Results and Comparison}

\begin{figure}[tb]
  \centering
  \includegraphics[width=3.5in]{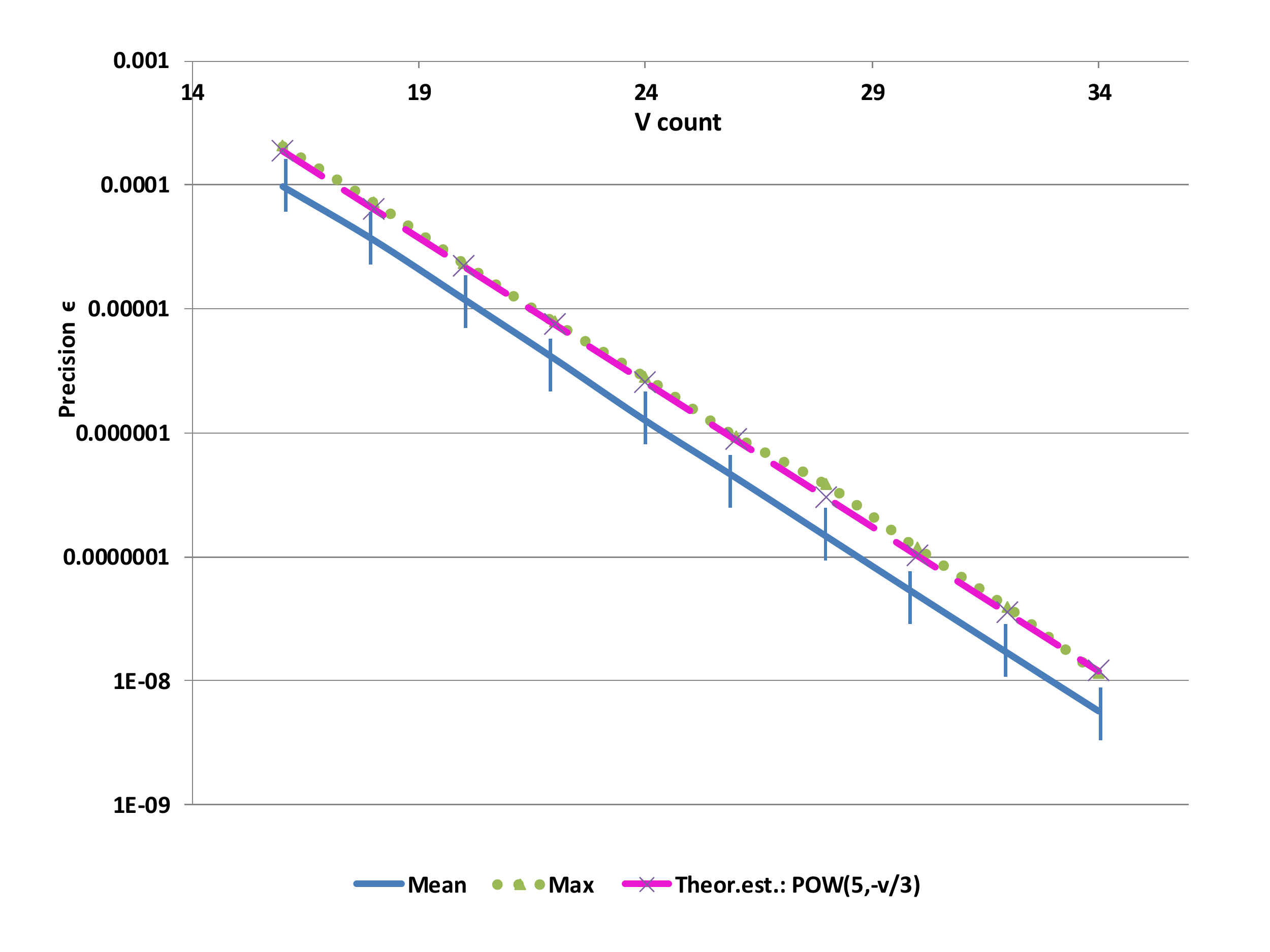}
  \caption[W-circuit synthesis precision statistics]{$V$ count versus mean precision $\epsilon$ (measured by trace distance) of the approximation of $1000$ random unitaries.
  The plot shows the $5^{-v/3}$ precision goal (dashed pink), the experimental average (solid blue), and the worst cases (dotted green).}
\label{fig:SKResults}
\end{figure}

The chart in Figure \ref{fig:SKResults} presents the results of evaluating our algorithm on a set of 1000 random unitaries.
The vertical axis plots precision $\epsilon$ on a logarithmic scale.
The horizontal axis plots the maximum $V$ count allowed in the approximating circuit.
The dashed pink curve represents the tight precision target of $5^{-V_c/3}$.
The solid blue curve represents the average approximation distance over the set of test unitaries; the error bars measure the standard deviation around the average.
The green dotted curve plots the worst cases.
For a small number of test unitaries, the algorithm could not find an approximating sequence with precision $5^{-v/3}$ or better for $V$ count $v$.


In practice, experimental evidence suggests that this algorithm works well for the majority of non-axial unitary rotations.
We have found that approximation circuits obtained by Algorithm \ref{alg:subroutine} are about 4 times deeper than the circuits produced by direct search using Algorithm \ref{alg:dsearch}.
This factor primarily arises because the non-axial rotation is first broken into three axial components and then a more liberal precision of $\epsilon_4(L)$ is pursued for each component.

Table \ref{tab:Algh1VsSearch} compares the $V$ count and precision values for Algorithms \ref{alg:subroutine}--\ref{alg:dsearch}, for precisions above $10^{-9}$.
The results support the rough factor of 4 reduction in $V$ count achieved by Algorithm \ref{alg:dsearch}.
The improvement is also apparent from the plot shown in Figure \ref{fig:RelWork} (green versus black curves).

\begin{table}[tb]
\begin{tabular}{|c|c|c|c|}
   \hline
   $\epsilon$ & RA, Median & DS, Median & DS, Worst\\
   \hline \hline
   $10^{-3}$ & 56.5 & 13 & 15\\
   $10^{-4}$ & 73.5 & 15.9 & 18\\
   $10^{-5}$ & 91 & 20.5 & 22\\
   $10^{-6}$ & 108 & 24.6 & 26\\
   $10^{-7}$ & 125 & 28.95 & 31\\
   $10^{-8}$ & 142.5 & 33.2 & 35\\
   $10^{-9}$ & 159.5 &  37.3 & 39\\
   \hline
 \end{tabular}
 \caption{$V$ counts for precision $\epsilon$ for two $V$ basis decomposition algorithms: Randomized Approximation (RA) (Algorithm \ref{alg:subroutine}) and Direct Search (DS) (Algorithm \ref{alg:dsearch}), for 1000 random non-axial rotations.
 Columns 2 and 3 list the median $V$ count; Column 4 lists the $V$ count for the worst case.}
   \label{tab:Algh1VsSearch}
 \end{table}

\section{Conclusions and Future Work}
In conclusion, we have proposed two novel algorithms for decomposing a single-qubit unitary into the $V$ basis, an efficiently universal basis that has some advantages over decomposing into the $\langle H,T\rangle$ basis.
Our algorithms produce efficient circuits that approximate a single-qubit unitary with high precision, and are computationally efficient in practice.  
Assuming a $V$ gate has the same cost as a $T$ gate, then our algorithms produce the shortest-depth approximation circuits known.

A key direction for future research is to determine a low-cost, exact implementation of a $V$ basis gate, which could include a native implementation on a given quantum computer architecture.
Discovery of such techniques would enable us to execute quantum circuits in $V$ basis at the quantum cost that is significantly lower than the cost of executing equivalent circuits  in the $\langle H,T\rangle$ basis.
It would also motivate further research into decomposition into other basis sets.

\begin{acknowledgments}
The Authors wish to thank Andreas Blass and Greg Martin for very useful discussions.
\end{acknowledgments}

\bibliography{BocharovSvore_PRL}

\appendix
\section{$V$ Gate Implementation}
\label{app:Vgate}
Any $V$ gate can be approximated using, for example, a $\langle H,T \rangle$ decomposition algorithm, but this is \emph{approximate} and requires a sequence length of 70 or more, depending on the desired precision.
In this appendix, we describe an \emph{exact} implementation of the $V$ gate using the protocol given in Ref.~\onlinecite{DCS}.
For additional details on the protocol, we refer the reader to Ref.~\onlinecite{DCS}.
This method can be used to implement any of the $V$ gates; here we show the implementation for $V_3$.

\subsection{Implementing $V_3$}

We implement the $V_3$ gate exactly, using a probabilistic circuit and a non-stabilizer resource state denoted as $\ket{H_2}$, where
\[
V_{3}=(I+2\, i\, Z)/\sqrt 5.
\]

In matrix form, this gate can be represented as:
\begin{equation*}
V_3 = \frac{1}{\sqrt 5}\left[
\begin{array}{cc}
 1+2i & 0 \\
 0 & 1-2i \\
\end{array}
\right]
= \frac{1+2i}{\sqrt 5}\left[
\begin{array}{cc}
 1 & 0 \\
 0 & \frac{-3}{5}-\frac{4}{5}i \\
\end{array}
\right]
\end{equation*}

Ignoring the global phase, we can solve for the angle of rotation $\theta$ about the $Z$ axis using the following identity:
\begin{eqnarray*}
e^{i\theta} &=& \cos\theta + i\sin\theta = \frac{-3}{5}-\frac{4}{5}i'\\
&\Rightarrow& \theta = \cos^{-1}(-\frac{3}{5}) \approx 4.06889.
\end{eqnarray*}

Consider the angle $\theta' = \cos^{-1}(\frac{3}{5}) \approx 0.927295$.
This angle is $\pi$ away from $\theta$: $\theta = \theta' + \pi$.
Thus, if we desire the rotation $Z(\theta)$, we can implement the gate sequence $Z(\theta) = Z(\theta')Z(\pi)$, where $Z(\pi)$ is the Pauli $Z$ gate.

Observe that
\begin{equation*}
\theta' = 2\theta_2+\frac{\pi}{4},
\end{equation*}
where $2\theta_2$ is the angle resulting from using the resource state $e^{-i\pi/8}HS^\dag\ket{H_2}$.
The $\frac{\pi}{4}$ part of the angle is a $T=Z(\pi/4)$ gate, thus $Z(\theta') = Z(2\theta_2)T$,
and $Z(\theta) = Z(2\theta_2)TZ$.

The circuit to obtain a rotation of $Z(2\theta_2)$ is given in Fig.~\ref{fig:H2Rot}.
The circuit results in the application of $\pm 2\theta_2$ to $\ket{\psi}$, each with equal probability.
If $m=0$, then $Z(2\theta_2)$ has been applied.
If $m=1$, we must apply $Z(4\theta_2)$.  Further details on the $m=1$ case are given in Section \ref{Sec:VCost}.

\begin{figure}[tb]
\[\Qcircuit @C=1em @R=1em {
\lstick{e^{-i\frac{\pi}{8}}HS^\dag\ket{H_2}} & \gate{X}  & \meter & \rstick{\ket{m}} \cw \\
\lstick{\ket{\psi}}      & \ctrl{-1} & \rstick{Z((-1)^m\, 2\theta_2)\ket{\psi}} \qw \\
}\]
\caption{Circuit to rotate by angle $\pm2\theta_2$ around the $Z$-axis.}
\label{fig:H2Rot}
\end{figure}
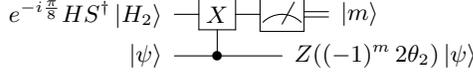

\subsection{Obtaining an $\ket{H_2}$ Resource State}

To implement $V_3$, we require a non-stabilizer state $\ket{H_2}$, which can be obtained using the ladder given in Ref.~\onlinecite{DCS}.
We begin by describing how to obtain the ladder state $\ket{H_2}$, and then describe how to implement $V_3$ using this resource state.

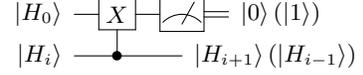
\begin{figure}[tb]
\[
\Qcircuit @C=1em @R=1em {
\lstick{\ket{H_0}} & \gate{X}  & \meter & \rstick{\ket{0} (\ket{1})} \cw \\
\lstick{\ket{H_i}} & \ctrl{-1} & \rstick{\ket{H_{i+1}} (\ket{H_{i-1}})} \qw \\
}\]
\caption{Two-qubit circuit used to obtain new $\ket{H_i}$ states from initial resource states $\ket{H_0}$. Upon measuring the 0 (1) outcome, the output state is $\ket{H_{i+1}}$ ($\ket{H_{i-1}}$).}\label{fig:2QbLadderCircs}
\end{figure}

The circuit of Fig.~\ref{fig:2QbLadderCircs} measures the parity of the two input qubits and decodes the resulting state into the second qubit.
Let the two inputs be magic states $\ket H$ and define $\theta_0=\frac\pi8$:
\begin{equation*}
\ket H = \ket{H_0} = \cos \theta_0 \ket 0 + \sin \theta_0 \ket 1.
\end{equation*}

Upon application of the controlled-{\tt NOT} gate $\Lambda(X)$,
\begin{eqnarray*}
  \ket{H_0}\ket{H_0} & \xrightarrow{\Lambda{(X)}} & \cos^2 \theta_0 \ket{00} + \sin^2 \theta_0 \ket{01}\\
  & & + \cos \theta_0 \sin \theta_0 (\ket{11} + \ket{10}).
\end{eqnarray*}

Upon measurement $m$ of the first qubit, we have
\begin{eqnarray*}
  \xrightarrow{m=0}&&  \frac{\cos^2\theta_0\ket0 + \sin^2\theta_0\ket1}{\cos^4\theta_0+\sin^4\theta_0}, \mbox{\ or\ } \\
  \xrightarrow{m=1}&&  \frac{1}{\sqrt2}(\ket0 + \ket1).
\end{eqnarray*}

We define $\theta_1$ such that
\begin{eqnarray*}
  \cos\theta_1\ket0 + \sin\theta_1\ket1& = &  \frac{\cos\theta_0\ket0 + \sin\theta_0\ket1}{\cos^4\theta_0+\sin^4\theta_0},
\end{eqnarray*}
from which we deduce
\begin{equation*}
\cot\theta_1=\cot^2\theta_0.
\end{equation*}

Thus we have
\begin{equation*}
\ket{H_1}=\cos\theta_1\ket0 + \sin\theta_1\ket1,
\end{equation*}
a non-stabilizer state obtained from $\ket H$ states, Clifford operations, and measurements.
If the measurement outcome is 1, then we obtain a stabilizer state
and discard the output (see Fig.~\ref{fig:2QbLadderCircs}).
The measurement outcomes occur with respective probabilities
$p_{m=0,0} = \cos^4 \theta_0 +\sin^4 \theta_0 =\frac{3}{4}$ and $p_{m=1,0} = 1-p_0 = \frac{1}{4}$.

Now consider the next step of the ladder.
We recurse on this protocol using the non-stabilizer states produced by the previous round of the protocol as input to the circuit in Fig.~\ref{fig:2QbLadderCircs}.
In this case, we need only go to state $\ket{H_2}$, which is defined as
\begin{eqnarray*}
\ket{H_2} = \cos \theta_2 \ket0 + \sin \theta_2 \ket1,
\end{eqnarray*}
where
\begin{equation*}
\cot\theta_2 = \cot^{3} \theta_0.
\end{equation*}

To obtain this state, we use as input the previously produced $\ket{H_1}$ state and a new $\ket{H_0}$ state:
\begin{eqnarray*}
  \ket{H_0}\ket{H_1} & \xrightarrow{\Lambda{(X)}} & \cos \theta_0 \cos \theta_1 \ket{00} + \sin \theta_0 \sin \theta_1 \ket{01}\\
  & & + \sin \theta_0 \cos \theta_1 \ket{10} + \cos \theta_0 \sin \theta_1 \ket{11}.
\end{eqnarray*}

Upon measurement of the first qubit, we have
\begin{eqnarray*}
  &\xrightarrow{m=0}&  (\cos \theta' \ket0 + \sin \theta' \ket1),\\
  &\xrightarrow{m=1}&  (\cos \theta'' \ket0 + \sin \theta'' \ket1), \mbox{\ where} \\
  \cot \theta'  &=& \cot \theta_1 \cot \theta_0 =\cot^{3} \theta_0 =\cot \theta_{2},\\
  \cot \theta''  &=& \cot \theta_1 \tan \theta_0 =\cot^{1} \theta_0 =\cot \theta_{0}.
\end{eqnarray*}

Thus, if we measure $m=0$, we obtain the state $\ket{H_{2}}$ and if we measure $m=1$, we obtain $\ket{H_{0}}$.
The probability of measuring 0 is given by
\begin{eqnarray*}
  p_{m=0,1} &=& \cos^2 \theta_1 \cos^2 \theta_0 +\sin^2 \theta_1 \sin^2 \theta_0.
\end{eqnarray*}

Note that $\frac{3}{4}\leq p_{m=0,i} <\cos^2 \frac\pi8=0.853\ldots$, so the probability of obtaining $\ket{H_2}$ is far higher than the probability of obtaining $\ket{H_0}$.


\subsection{Resource Cost}
\label{Sec:VCost}
What is the cost of obtaining a $\ket{H_2}$ state in terms of $\ket{H_0}$ resource states?
We simulated 10 million instances of the ladder to determine the average cost of obtaining $\ket{H_1}$ and $\ket{H_2}$.
Recall that the probabilities of moving ``up" the ladder are higher than moving ``down" the ladder.
For $\ket{H_1}$, the cost is on average $2.66$ $\ket{H_0}$ states, with a median cost of $2$.
For $\ket{H_2}$, the cost is on average $4.35$ $\ket{H_0}$ states, with a median cost of $3$.

What is the cost of implementing $Z(\theta')$?
Recall that our technique uses a probabilistic circuit with a success probability of $1/2$.
Thus, on average it will require two attempts for success.

If the circuit succeeds, the cost in $\ket{H_0}$ states is roughly $5.35$.
If the circuit fails, then we must correct the circuit by applying a $Z$ rotation of $2*2\theta_2$.
This requires preparing a resource state $Z(4\theta_2)$, which can be done using the circuit given in Fig.~\ref{fig:H2Rot} with $\ket{\psi} = e^{-i\pi/8}HS^\dag\ket{H_2}$.
On average, two attempts will be required to prepare the state, resulting in an average cost of 4 $\ket{H_2}$ states, or roughly $4*4.35 = 17.4$.
The prepared state is applied to the target qubit $\ket{\psi}$ using the same circuit in Fig.~\ref{fig:H2Rot}, except now the top input qubit is $\ket{Z(4\theta_2)}$.
The total cost if the circuit succeeds on this second attempt, after the first failure, is $1+4.35+17.4 = 22.75$.

As can be seen, each attempt that fails requires preparation of a more costly resource state for the next attempt.
The series of attempts is a negative binomial of parameter $p = \frac{1}{2}$ and the expected number of attempts to achieve success goes as $\sim \frac{1}{p} = 2$.
In general, at attempt $k$, a resource state to perform rotation by angle $2^k*2\theta_2$ is required.
The cost of preparing the resource state grows exponentially in $k$, and in the limit is infinite.
However, in practice, we will only make 1--3 attempts, and upon the final failure, apply a different approximation technique to the remaining rotation \footnote{We may in fact apply the backoff technique to the entire remaining sequence, that is, by determining the unitary from the remaining sequence and approximating it with the backoff technique.}, using methods of, for example, Refs.~\cite{Selinger,KMM1231}.
The optimal number of attempts to make before backing off to a different technique can be determined based on the required precision level (since the backoff method will only be approximate) and the chosen technique.

\end{document}